\newcommand{\eps}{\ensuremath{\varepsilon}}
\newcommand{\poly}[1]{\text{poly}\left(#1\right) }
\newcommand{\ignore}[1]{}
\newcommand{\RR}{\mathbb{R}}
\newcommand{\NN}{\mathbb{N}}
\newcommand{\Prob}[1]{\ensuremath{\mathbb{P}\left(#1\right)}}
\DeclareMathOperator{\EE}{\mathbb{E}}
\newcommand{\pinv}[1]{ {#1}^\dagger}
\newcommand{\norm}[1]{\ensuremath{\left\|#1\right\|_2}}
\newcommand{\frobnorm}[1]{\ensuremath{\left\|#1\right\|_{\text{\rm F}}}}
\newcommand{\sr}[1]{\ensuremath{\mathrm{\textbf{\footnotesize sr}}\left(#1\right)}}
\newcommand{\trace}[1]{\ensuremath{\mathrm{\textbf{tr}}\left(#1\right)}}
\newcommand{\rank}[1]{\ensuremath{\mathrm{\textbf{{\footnotesize rank}}}\left(#1\right)}}
\newcommand{\im}[1]{\ensuremath{\mathrm{\textbf{Im}}\left(#1\right)}}
\renewcommand*{\backref}[1]{}
\renewcommand*{\backrefalt}[4]{%
	\ifcase #1 %
		(Not cited) %
	\or
		(Cited on page~#2)%
	\else 
		(Cited on pages~#2)%
	\fi
}
\newtheorem{remark}{Remark}
\newtheorem{claim}{Claim}
\begin{document}

\title{\Large Low Rank Matrix-valued Chernoff Bounds and Approximate Matrix Multiplication
}
\author{Avner Magen
\thanks{University of Toronto, Department of Computer Science, Email : \url{avner@cs.toronto.edu}. } \\
\and 
Anastasios Zouzias
\thanks{University of Toronto, Department of Computer Science, Email : \url{zouzias@cs.toronto.edu}.}
}
\date{}

\maketitle


\begin{abstract} 
\small\baselineskip=9pt In this paper we develop algorithms for approximating matrix multiplication with respect to the spectral norm. Let $A\in{\RR^{n\times m}}$ and $B\in{\RR^{n \times p}}$ be two matrices and $\eps>0$. We approximate the product $A^\top B$ using two sketches $\widetilde{A}\in{\RR^{t\times m}}$ and $\widetilde{B}\in{\RR^{t\times p}}$, where $t\ll n$, such that
\begin{equation*}
 \norm{\widetilde{A}^\top \widetilde{B} - A^\top B} \leq \eps \norm{A}\norm{B}
\end{equation*}
with high probability. We analyze two different sampling procedures for constructing $\widetilde{A}$ and $\widetilde{B}$; one of them is done by i.i.d. non-uniform sampling rows from $A$ and $B$ and the other by taking random linear combinations of their rows. We prove bounds on $t$ that depend only on the intrinsic dimensionality of $A$ and $B$, that is their rank and their stable rank. \par For achieving bounds that depend on rank when taking random linear combinations we employ standard tools from high-dimensional geometry such as concentration of measure arguments combined with elaborate $\eps$-net constructions. For bounds that depend on the smaller parameter of stable rank this technology itself seems weak. However, we show that in combination with a simple truncation argument it is amenable to provide such bounds. To handle similar bounds for row sampling, we develop a novel matrix-valued Chernoff bound inequality which we call low rank matrix-valued Chernoff bound. Thanks to this inequality, we are able to give bounds that depend only on the stable rank of the input matrices. \par We highlight the usefulness of our approximate matrix multiplication bounds by supplying two applications. First we give an approximation algorithm for the $\ell_2$-regression problem that returns an approximate solution by randomly projecting the initial problem to dimensions linear on the rank of the constraint matrix. Second we give improved approximation algorithms for the low rank matrix approximation problem with respect to the spectral norm.
\end{abstract}

\section{Introduction}\label{sec:tools}

In many scientific applications, data is often naturally expressed as a matrix, and computational problems on such data are reduced to standard matrix operations including matrix multiplication, $\ell_2$-regression, and low rank matrix approximation. 

In this paper we analyze several approximation algorithms with respect to these operations. All of our algorithms share a common underlying framework which can be described as follows: Let $A$ be an input matrix that we may want to apply a matrix computation on it to infer some useful information about the data that it represents. The main idea is to work with a sample of $A$ (a.k.a. sketch), call it $\widetilde{A}$, and hope that the obtained information from $\widetilde{A}$ will be in some sense close to the information that would have been extracted from $A$.

In this generality, the above approach (sometimes called ``Monte-Carlo method for linear algebraic problems'') is ubiquitous, and is responsible for much of the development in fast matrix computations~\cite{lowrank:FKV,matrixmult:drineas,sarlos,l2_regression:drineas06,matrix:sparsification:optas,CW_stoc09,matrix:volume_sampling:FOCS2010}.

As we sample $A$ to create a sketch $\widetilde{A}$, our goal is twofold: (\textit{i}) guarantee that $\widetilde{A}$ resembles $A$ in the relevant measure, and (\textit{ii}) achieve such a $\widetilde{A}$ using as few samples as possible. The standard tool that provides a handle on these requirements when the objects are real numbers, is the Chernoff bound inequality. However, since we deal with matrices, we would like to have an analogous probabilistic tool suitable for matrices. Quite recently a non-trivial generalization of Chernoff bound type inequalities for matrix-valued random variables was introduced by Ahlswede and Winter~\cite{chernoff:matrix_valued:AW}. Such inequalities are suitable for the type of problems that we will consider here. However, this type of inequalities and their variants that have been proposed in the literature \cite{chernoff:matrix_valued:Bernstein:Gross,recht:simple_completion,chernoff:matrix_valued:Gross,chernoff:matrix_valued:Tropp} all suffer from the fact that their bounds depend on the dimensionality of the samples. We argue that in a wide range of applications, this dependency can be quite detrimental. 

Specifically, whenever the following two conditions hold we typically provide stronger bounds compared with the existing tools: (\textit{a}) the input matrix has low intrinsic dimensionality such as rank or stable rank, (\textit{b}) the matrix samples themselves have low rank. The validity of condition (\textit{a}) is very common in applications from the simple fact that viewing data using matrices typically leads to redundant representations. Typical sampling methods tend to rely on extremely simple sampling matrices, i.e., samples that are supported on only one entry~\cite{matrix:sparsification:arora,matrix:sparsification:optas,matrix:sparsification:zouzias} or samples that are obtained by the outer-product of the sampled rows or columns~\cite{matrixmult:drineas,lowrank:rankone:VR}, therefore condition (\textit{b}) is often natural to assume. By incorporating the rank assumption of the matrix samples on the above matrix-valued inequalities we are able to develop a ``dimension-free'' matrix-valued Chernoff bound. See Theorem~\ref{thm:chernoff:matrix_valued:low_rank} for more details. 

Fundamental to the applications we derive, are two probabilistic tools that provide concentration 
bounds of certain random matrices. These tools are inherently different, where each pertains to a 
different sampling procedure. In the 
first, we multiply the input matrix by a random sign matrix, whereas in the second we sample
rows according to a distribution that depends on the input matrix.
In particular,  the first 
method is oblivious (the probability space does not depend on the input matrix)
 while the second is not.

The first tool is the so-called subspace Johnson-Lindenstrauss lemma. Such a result was obtained in \cite{sarlos} (see also~\cite[Theorem~1.3]{jl:manifold}) although it appears implicitly in results extending the original Johnson Lindenstrauss lemma (see~\cite{magen07}). The techniques for proving such a result with possible worse bound are not new and can be traced back even to Milman's proof of Dvoretsky theorem~\cite{Dvoretsky:Milman}.
\begin{lemma}\label{lem:jl_subspace} (Subspace JL lemma \cite{sarlos})
Let $\mathcal{W} \subseteq \RR^d$ be a linear subspace of dimension $k$ and
$\eps\in{(0, 1/3)}$. Let $R$ be a $t\times d$ random sign matrix rescaled by $1/\sqrt{t}$, namely $R_{ij} = \pm 1/\sqrt{t}$ with equal probability.
Then
\begin{eqnarray} 
\Prob{ (1-\eps) \norm{w}^2 \leq \norm{Rw}^2 \leq (1+\eps)\norm{w}^2,\ \forall\ w\in\mathcal{W} } \nonumber \\
 \geq 1 - c_2^k \cdot \exp (- c_1 \eps^2 t),\label{eq:jl_subspace}
\end{eqnarray}
where $c_1>0,c_2>1$ are constants.
\end{lemma}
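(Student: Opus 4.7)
The plan is to combine the classical pointwise Johnson--Lindenstrauss estimate for a random sign matrix with an $\eps$-net discretization of the unit sphere of $\mathcal{W}$. By homogeneity it suffices to control $|\norm{Rw}^2-1|$ uniformly over $\SS_\mathcal{W} := \{w \in \mathcal{W} : \norm{w} = 1\}$, and the starting ingredient is the standard scalar tail bound: for any fixed $w \in \SS_\mathcal{W}$, a moment-generating-function (or Hanson--Wright) argument applied to the sum of independent sub-Gaussian terms $\norm{Rw}^2$ yields $\Prob{|\norm{Rw}^2-1|>\eta} \leq 2\exp(-c_0\eta^2 t)$ for some absolute $c_0>0$ and all $\eta \in (0,1/2)$.

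I would then fix once and for all an absolute constant $\delta \in (0,1/2)$, say $\delta=1/4$, and pick a $\delta$-net $\mathcal{N}\subseteq \SS_\mathcal{W}$ of size at most $(1+2/\delta)^k$ by the usual volumetric covering argument inside the $k$-dimensional subspace $\mathcal{W}$. A union bound at accuracy $\eta = \alpha\eps$, for a small absolute constant $\alpha$ to be chosen, gives that with probability at least $1-2(1+2/\delta)^k \exp(-c_0\alpha^2 \eps^2 t)$ the estimate $|\norm{Rv}^2-1|\leq \alpha\eps$ holds simultaneously for every $v \in \mathcal{N}$. To upgrade pointwise control on $\mathcal{N}$ to uniform control on $\SS_\mathcal{W}$, I would pass to the quadratic form: let $T$ be the restriction of $R^\top R - I$ to $\mathcal{W}$ and recall that $\norm{T} = \sup_{w\in\SS_\mathcal{W}} |w^\top T w|$ by symmetry. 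For an arbitrary unit $w$, pick $v\in\mathcal{N}$ with $\norm{w-v}\leq \delta$, set $u = w-v$, and expand $w^\top T w$ bilinearly; the pure-$v$ term is bounded by the net estimate $\alpha\eps$, while the cross and pure-$u$ terms contribute at most $(2\delta+\delta^2)\norm{T}$. Taking the supremum over $w$ yields the self-bounding inequality $\norm{T} \leq \alpha\eps + (2\delta+\delta^2)\norm{T}$.

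The main obstacle, and really the only delicate step, is the joint calibration of the absolute constants $\delta$ and $\alpha$. One needs $1-2\delta-\delta^2>0$ (amply satisfied at $\delta=1/4$, where the quantity equals $7/16$), so that the self-bounding inequality rearranges to $\norm{T}\leq \alpha\eps/(1-2\delta-\delta^2)$; then $\alpha$ must be chosen small enough that this upper bound is at most $\eps$ (e.g.\ $\alpha = 7/16$ for $\delta=1/4$). Because both $\delta$ and $\alpha$ are absolute constants, neither the base of the exponential $(1+2/\delta)^k$ nor the rate $c_0\alpha^2$ inside the exponent depend on $\eps$ or $k$, so the resulting probability bound takes the advertised form $c_2^k\exp(-c_1\eps^2 t)$ with $c_1 = c_0\alpha^2$ and $c_2 = 1+2/\delta$.
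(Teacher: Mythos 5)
Your argument is correct: the pointwise sub-exponential (Bernstein/Hanson--Wright) tail for $\norm{Rw}^2$ with a sign matrix, the $(1+2/\delta)^k$ net inside the $k$-dimensional subspace, and the self-bounding step $\norm{T}\leq \alpha\eps+(2\delta+\delta^2)\norm{T}$ with $\delta=1/4$, $\alpha=7/16$ fit together and yield exactly the claimed bound $c_2^k\exp(-c_1\eps^2 t)$. The paper itself gives no proof of this lemma, citing Sarlos and related work, and your net-plus-concentration argument is precisely the standard route used in those references, so there is nothing further to reconcile.
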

The importance of such a tool, is that it allows us to get bounds on the necessary dimensions of the random sign matrix in terms of the \emph{rank} of the input matrices, see Theorem~\ref{thm:matrixmult} (\textit{i.a}).

While the assumption that the input matrices have low rank is a fairly reasonable assumption, one should be a little cautious as the property of 
having low rank is not robust. Indeed, if random noise is added to a matrix, even if low rank, the matrix obtained will have full rank almost 
surely. On the other hand, it can be shown that the added noise cannot distort the Frobenius and operator norm significantly; which makes the notion of {\em stable rank} robust and so the assumption of low stable rank on the input is more applicable than the low rank assumption.

Given the above discussion, we resort to a different methodology, called matrix-valued Chernoff bounds. These are non-trivial generalizations 
of the standard Chernoff bounds over the reals and were first introduced in~\cite{chernoff:matrix_valued:AW}.  Part of the contribution of the current work is to show that such inequalities, similarly to their real-valued ancestors, provide powerful tools to analyze randomized algorithms. There is a rapidly growing line of research exploiting the power of such inequalities including matrix approximation by sparsification~\cite{matrix:sparsification:optas,matrix:sparsification:zouzias}; analysis of algorithms for matrix completion and decomposition of low rank matrices~\cite{chernoff:matrix_valued:Candes_Sparse,chernoff:matrix_valued:Gross,recht:simple_completion}; and semi-definite relaxation and rounding of quadratic maximization problems~\cite{chernoff:matrix_valued:opt:Nemirovski,chernoff:matrix_valued:opt,chernoff:matrix_valued:opt:journal}.

The quality of these bounds can be measured by the number of samples needed in order to obtain small error probability. The original result of 
\cite[Theorem~19]{chernoff:matrix_valued:AW} shows that\footnote{For ease of presentation we actually provide the restatement presented in~\cite[Theorem~2.6]{chernoff:matrix_valued:derand:WX08}, which is more suitable for this discussion.} if $M$ is distributed according to some distribution over $n \times n$ matrices with zero mean\footnote{Zero mean means that the (matrix-valued) expectation is the zero $n\times n$ matrix.}, and if $M_1,\dots ,M_t$ are independent copies of $M$ then for any $\eps>0$, 
\begin{equation}\label{ineq:chernoff:naive}
\Prob{\norm{\frac1{t}\sum_{i=1}^{t} M_i} > \eps} \leq n \exp\left(- C 
\frac{\eps^2 t}{\gamma^2}\right),
\end{equation}
where $\norm{M}\leq \gamma$ holds almost surely and $C>0$ is an absolute constant.

Notice that the number of samples in Ineq.~\eqref{ineq:chernoff:naive} depends logarithmically in $n$. In general, unfortunately, such a dependency is inevitable: take for example a diagonal random sign matrix of dimension $n$. The operator norm of the sum of $t$ independent samples is precisely the 
maximum deviation among $n$ independent random walks of length $t$. In order to achieve a fixed bound on the maximum deviation with constant probability, it is easy to see that $t$ should grow logarithmically with $n$ in this scenario.

In their seminal paper, Rudelson and Vershynin provide a matrix-valued Chernoff bound that avoids the dependency on the dimensions by assuming that the matrix samples are the \emph{outer product} $x\otimes x$ of a randomly distributed vector $x$~\cite{lowrank:rankone:VR}. It turns out that this assumption is too strong in most applications, such as the ones we study in this work, and so we wish to relax it without increasing the bound significantly. In the following theorem we replace this assumption with that of having \emph{low rank}. We should note that we are not aware of a simple way to extend Theorem~$3.1$ of~\cite{lowrank:rankone:VR} to the low rank case, even constant rank. The main technical obstacle is the use of the powerful Rudelson selection lemma, see~\cite{rudelson:isotropic} or Lemma~$3.5$ of~\cite{lowrank:rankone:VR}, which applies only for Rademacher sums of outer product of vectors. We bypass this obstacle by proving a more general lemma, see Lemma~\ref{lem:E_p_vs_sum_of_squares}. The proof of Lemma~\ref{lem:E_p_vs_sum_of_squares} relies on the non-commutative Khintchine moment inequality~\cite{khintchine:LP86,khintchine:Buchholz} which is also the backbone in the proof of Rudelson's selection lemma. With Lemma~\ref{lem:E_p_vs_sum_of_squares} at our disposal, the proof techniques of~\cite{lowrank:rankone:VR} can be adapted to support our more general condition.
\begin{theorem}\label{thm:chernoff:matrix_valued:low_rank}
Let $0<\eps <1$ and $M$ be a random symmetric real matrix with $\norm{\EE{M}}\leq 1$ and $\norm{M} \leq \gamma$ almost surely. Assume that each
element on the support of $M$ has at most rank $r$. Set $t=\Omega(\gamma \log (\gamma/\eps^2)  /\eps^2)$. If $r\leq t$ holds almost surely, then 
\begin{equation*}
 \Prob{ \norm{ \dfrac1{t}\sum_{i=1}^{t}{M_i} - \EE M } >\eps }~\leq~ \dfrac1{\poly{t} }.
\end{equation*}
where $M_1,M_2, \dots , M_t$ are i.i.d. copies of $M$.
\end{theorem}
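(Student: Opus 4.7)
The plan is to follow the moment-method approach of Rudelson and Vershynin~\cite{lowrank:rankone:VR} for the rank-one case, using the paper's Lemma~\ref{lem:E_p_vs_sum_of_squares} (itself a consequence of the non-commutative Khintchine inequality of Lust-Piquard and Buchholz) in place of Rudelson's selection lemma, which is specific to Rademacher sums of outer products. The goal is to control a high moment $E_p := \bigl(\EE\,\norm{\sum_{i=1}^{t}(M_i - \EE M)}^{2p}\bigr)^{1/(2p)}$ and then convert it to a tail bound by Markov's inequality.

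First, standard symmetrization (introducing an independent copy of the sequence together with Rademacher signs $\eps_i$) yields $E_p \leq 2\bigl(\EE\,\norm{\sum_i \eps_i M_i}^{2p}\bigr)^{1/(2p)}$. Since the operator norm is dominated by the Schatten $2p$-norm, I would then apply Lemma~\ref{lem:E_p_vs_sum_of_squares} conditional on $M_1,\dots,M_t$ to obtain
\begin{equation*}
\bigl(\EE_\eps\,\|{\textstyle\sum_i} \eps_i M_i\|_{S_{2p}}^{2p}\bigr)^{1/(2p)} \leq C\sqrt{p}\;\|({\textstyle\sum_i} M_i^2)^{1/2}\|_{S_{2p}}.
\end{equation*}

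The key use of the rank assumption is the observation that each $M_i^2$ has rank at most $r$, so $\sum_i M_i^2$ has rank at most $rt$, and hence its Schatten $2p$-norm is dominated by its operator norm up to a factor of only $(rt)^{1/(2p)}$. Choosing $p = \Theta(\log(rt))$ makes this factor $O(1)$, which is precisely what frees the bound from the usual logarithmic dependence on the ambient dimension. Combining this with the previous display and applying Jensen's inequality in the outer expectation over the $M_i$'s yields $E_p \lesssim \sqrt{p}\;\sqrt{\EE\,\norm{\sum_i M_i^2}}$.

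It remains to bound $\EE\,\norm{\sum_i M_i^2}$ in a way that is compatible with the linear-in-$\gamma$ scaling claimed in the theorem. The self-bounding step uses $M_i^2 \preceq \gamma |M_i|$, combined with $\|\EE M\| \leq 1$, to relate $\norm{\sum_i M_i^2}$ back to $E_p$ itself, producing a quadratic inequality in $E_p$ whose solution gives $E_p = O(\sqrt{\gamma t \log t})$ as soon as $t = \Omega(\gamma \log(\gamma/\eps^2)/\eps^2)$. Markov's inequality then gives
\begin{equation*}
\Prob{\norm{\tfrac{1}{t}{\textstyle\sum_i} (M_i - \EE M)} > \eps} \leq \left(\tfrac{E_p}{\eps t}\right)^{2p} \leq \tfrac{1}{\poly{t}}
\end{equation*}
as required. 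I expect the self-bounding step to be the main obstacle: the convenient inequality $M^2 \preceq \gamma M$ is only available for positive semidefinite $M$, so for general symmetric samples one needs a more delicate argument (via $|M|$ or a positive/negative decomposition) in order to avoid losing an extra factor of $\gamma$ in the final sample complexity.
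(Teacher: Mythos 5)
Your proposal follows essentially the same route as the paper's own proof: symmetrization plus the non-commutative Khintchine inequality (packaged as Lemma~\ref{lem:E_p_vs_sum_of_squares}), the Schatten-versus-operator-norm rank factor $(rt)^{1/p}$ tamed by taking $p=\Theta(\log t)$, the self-bounding step $M^2\preceq\gamma M$ giving an inequality in $E_p$, and the moment method to conclude. The obstacle you flag at the end is handled in the paper exactly as you suggest: one first splits $M=M_{+}+M_{-}$ into its positive and negative semi-definite parts, runs the argument for each part separately, and loses only a constant factor.
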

\begin{proof}
See Appendix, page~\pageref{sec:chernoff:matrix_valued:low_rank}. 
\end{proof}
\begin{remark}[Optimality]
The above theorem cannot be improved in terms of the number of samples required without changing its form, since in the special case where the rank of the samples is one it is exactly the statement of~Theorem~$3.1$ of \cite{lowrank:rankone:VR}, see~\cite[Remark~$3.4$]{lowrank:rankone:VR}.
\end{remark}
\begin{table*}[ht]
{\small
\centering
    \begin{tabular}{ | c | c | c | c | c |}
    \hline
    \multicolumn{5}{|c|}{Variants of Matrix-valued Inequalities} \\
    \hline
    \emph{Assumption on the sample} $M$     & \# \emph{of samples} ($t$)    & Failure Prob. & \emph{References}      & \emph{Comments} 
\\ \hline\hline
    $\norm{M}\leq \gamma $ a.s.	    & $\Omega (\gamma^2  \log (n) /\eps^{2})$ & $1/\poly{n}$ & \cite{chernoff:matrix_valued:derand:WX08}   & {\small Hoeffding} 
\\ \hline
    $\norm{M}\leq \gamma $ a.s., $\norm{ \EE M^2} \leq \rho^2$    & $\Omega ( ( \rho^2 + \gamma \eps/3 )  \log (n) /\eps^{2})$ & $1/\poly{n}$ & \cite{recht:simple_completion}   & {\small Bernstein} 
\\ \hline
 $\norm{M} \leq \gamma $ a.s., $M=x\otimes x$, $\norm{\EE{M}}\leq 1$ & $\Omega(\gamma \log (\gamma
/\eps^2)/ \eps^2 )$ & $\exp (-\Omega(\epsilon^2 t/(\gamma \log t) ))$ & \cite{lowrank:rankone:VR} & {\small Rank one} 
\\ \hline
 $\norm{M} \leq \gamma $, $\rank{M} \leq t$ a.s., $\norm{\EE{M}}\leq 1$ & $\Omega( \gamma  \log (\gamma /\eps^2) /\eps^2)$ & $1/\poly{t}$ & {\small Theorem~\ref{thm:chernoff:matrix_valued:low_rank}} & {\small Low rank}
\\
\hline
\end{tabular}}
\caption{Summary of matrix-valued Chernoff bounds. $M$ is a probability
distribution over symmetric $n\times n$ matrices. $M_1,\dots ,M_t$ are i.i.d. copies of $M$.}
\end{table*}
We highlight the usefulness of the above main tools by first proving a ``dimension-free'' approximation algorithm for matrix multiplication with respect to the spectral norm (Section~\ref{sec:apps:matrix_mult}). Utilizing this matrix multiplication bound we get an approximation algorithm for the $\ell_2$-regression problem which returns an approximate solution by randomly projecting the initial problem to dimensions linear on the rank of the constraint matrix (Section~\ref{sec:apps:l2_regression}). Finally, in Section~\ref{sec:apps:low_rank} we give improved approximation algorithms for the low rank matrix approximation problem with respect to the spectral norm, and moreover answer in the affirmative a question left open by the authors of~\cite{low_rank:STOC09}.
\section{Preliminaries and Definitions}
The next discussion reviews several definitions and facts from linear algebra; for more details, see~\cite{book:perturbation:stewart,book:GVL,book:matrix:Bhatia}. We abbreviate the terms independently and identically distributed and almost surely with i.i.d. and a.s., respectively. We let $\mathbb{S}^{n-1}:=\{x\in\RR^n~|~\norm{x}=1\}$ be the $(n-1)$-dimensional sphere. A \emph{random Gaussian} matrix is a matrix whose entries are i.i.d. standard Gaussians, and a \emph{random sign} matrix is a matrix whose entries are independent Bernoulli random variables, that is they take values from $\{\pm 1\}$ with equal probability. For a matrix $A\in\RR^{n\times m}$, $A_{(i)}$, $A^{(j)}$, denote the $i$'th row, $j$'th column, respectively. For a matrix with rank $r$, the Singular Value Decomposition (SVD) of $A$ is the decomposition of $A$ as $U\Sigma V^\top$ where $U\in{\RR^{n\times r}}$, $V\in{\RR^{m\times r}}$ where the columns of $U$ and $V$ are orthonormal, and $\Sigma = \text{diag}(\sigma_1(A),\dots , \sigma_r(A))$ 
is $r\times r$ diagonal matrix. We further assume $\sigma_1 \geq \ldots \geq \sigma_r > 0$ and call these real numbers the {\em singular values} of $A$. By $A_k=U_k \Sigma_k V_k^\top$ we denote the best rank $k$ approximation to $A$, where $U_k$ and $V_k$ are the matrices formed by the
first $k$ columns of $U$ and $V$, respectively. We denote by $\norm{A}=\max \{ \norm{Ax}~|~\norm{x} =1 \}$ the spectral norm of $A$, and by
$\frobnorm{A}=\sqrt{\sum_{i,j}{A_{ij}^2}}$ the Frobenius norm of $A$. We denote by $\pinv{A}$ the Moore-Penrose pseudo-inverse of $A$, i.e.,
$\pinv{A}=V\Sigma^{-1} U^\top$. Notice that $\sigma_1(A)=\norm{A}$. Also we define by $\sr{A}:=\frobnorm{A}^2/\norm{A}^2$ the \emph{stable rank} of $A$. Notice that the inequality $\sr{A} \leq \rank{A}$ always holds. The orthogonal projector of a matrix $A$ onto the row-space of a matrix $C$ is denoted by $P_C (A) = A\pinv{C}C$. By $P_{C,k}(A)$ we define the best rank-$k$ approximation of the matrix $P_C (A)$.

\section{Applications}\label{sec:apps}
All the proofs of this section have been deferred to Section~\ref{sec:proofs}.
\subsection{Matrix Multiplication}\label{sec:apps:matrix_mult}
The seminal research of~\cite{lowrank:FKV} focuses on using non-uniform row sampling to speed-up the running 
time of several matrix computations. The subsequent developments of~\cite{matrixmult:drineas,lowrank:drineas, matrixdecomp:drineas} 
also study the performance of Monte-Carlo algorithms on primitive matrix algorithms including the matrix multiplication problem with 
respect to the Frobenius norm. Sarlos~\cite{sarlos} extended (and improved) this line of research using random projections. Most of the 
bounds for approximating matrix multiplication in the literature are mostly with respect to the Frobenius 
norm~\cite{matrixmult:drineas, sarlos, CW_stoc09}. In some cases, the 
techniques that are utilized for bounding the Frobenius norm also imply \emph{weak} bounds for the spectral norm,
 see~\cite[Theorem~4]{matrixmult:drineas} or~\cite[Corollary~11]{sarlos} which is similar with part (\textit{i.a}) of Theorem~\ref{thm:matrixmult}. 

In this section we develop approximation algorithms for matrix
multiplication with respect to the spectral norm. The algorithms that will be
presented in this section are based on the tools mentioned in Section~\ref{sec:tools}. Before stating our main dimension-free matrix multiplication theorem 
(Theorem~\ref{thm:matrixmult}), we discuss the best possible bound 
that can be achieved using the current known matrix-valued inequalities (to the best of our knowledge).
Consider a direct application of Ineq.~\eqref{ineq:chernoff:naive},
where a similar analysis with that in proof of Theorem~\ref{thm:matrixmult} (\textit{ii}) would allow us to achieve a bound of $\Omega(\widetilde{r}^2 \log (m+p) /\eps^2)$ on the number of samples (details omitted). 
However, as the next theorem indicates (proof omitted) we can get linear dependency on 
the stable rank of the input matrices gaining from the ``variance information'' of the samples; more precisely, this can be achieved by applying the matrix-valued Bernstein Inequality~see e.g.~\cite{chernoff:matrix_valued:Bernstein:Gross}, \cite[Theorem~3.2]{recht:simple_completion} or~\cite[Theorem~2.10]{chernoff:matrix_valued:Tropp}.
\begin{theorem}
Let $0< \eps < 1/2$ and let $A\in{\RR^{n\times m}}$, $B\in{\RR^{ n\times p}}$
both having stable rank at most $\widetilde{r}$. The
following hold:
\begin{enumerate}[(i)]
 \item 
Let $R$ be a $t\times n$ random sign matrix rescaled by $1/\sqrt{t}$. Denote by $\widetilde{A}=RA$ and $\widetilde{B}=RB$. If $t=\Omega(\widetilde{r} \log (m+p)/\eps^2 )$ then
\[ \Prob{\norm{\widetilde{A}^\top \widetilde{B} - A^\top B} \leq \eps \norm{A} \norm{B}
} \geq 1- \frac1{\poly{\widetilde{r}}  }. \]
\item
Let $p_i = \norm{A_{(i)}}\norm{B_{(i)}} /S$, where $S=\sum_{i=1}^{n}{\norm{A_{(i)}}\norm{B_{(i)}}}$ be a probability distribution over $[n]$. If we form a $t\times m$ matrix $\widetilde{A}$ and a $t\times p$ matrix $\widetilde{B}$ by
taking $t=\Omega(\widetilde{r} \log (m + p) /\eps^2)$ i.i.d. (row indices) samples from $p_i$, then
\[ \Prob{\norm{\widetilde{A}^\top \widetilde{B} - A^\top B} \leq \eps \norm{A} \norm{B}
} \geq 1- \frac1{\poly{\widetilde{r}}}. \]
\end{enumerate}
\end{theorem}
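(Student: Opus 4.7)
The plan is to realize $\widetilde A^\top\widetilde B-A^\top B$ as a sum of $t$ independent, mean-zero random matrices and invoke a matrix-valued Bernstein inequality (as in~\cite{chernoff:matrix_valued:Bernstein:Gross, recht:simple_completion, chernoff:matrix_valued:Tropp}). For part~(\textit{i}), write $R^\top R=t^{-1}\sum_{k=1}^{t} s_k s_k^\top$, where the unrescaled rows $s_k\in\{\pm 1\}^n$ satisfy $\EE[s_k s_k^\top]=I_n$, and set $M_k=t^{-1}(A^\top s_k s_k^\top B - A^\top B)$. For part~(\textit{ii}), sample $i_k\sim p$ i.i.d., let $X_k=(tp_{i_k})^{-1}A_{(i_k)}^\top B_{(i_k)}$ (a rank-one matrix satisfying $\EE X_k=t^{-1}A^\top B$), and set $M_k=X_k - t^{-1}A^\top B$. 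In both cases $\widetilde A^\top\widetilde B - A^\top B=\sum_k M_k$ with $\EE M_k=0$.

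Matrix Bernstein requires an almost-sure norm bound $\norm{M_k}\leq L$ and a variance proxy $\sigma^2\geq\max\{\norm{\sum_k \EE M_k M_k^\top},\;\norm{\sum_k \EE M_k^\top M_k}\}$, after which
\[
\Prob{\norm{{\textstyle\sum_k M_k}} > \tau}\leq(m+p)\exp\!\left(-\frac{\tau^2/2}{\sigma^2+L\tau/3}\right).
\]
Setting $\tau=\eps\norm{A}\norm{B}$, the target sample complexity is attained once $\sigma^2=O(\widetilde r\norm{A}^2\norm{B}^2/t)$ and $L=O(\widetilde r\norm{A}\norm{B}/t)$. For~(\textit{ii}), $L$ is immediate: $\norm{X_k}=(tp_{i_k})^{-1}\norm{A_{(i_k)}}\norm{B_{(i_k)}}=S/t$ with $S=\sum_i\norm{A_{(i)}}\norm{B_{(i)}}\leq\frobnorm{A}\frobnorm{B}\leq\widetilde r\norm{A}\norm{B}$ by Cauchy--Schwarz. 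For the variance I would compute $\EE X_k X_k^\top=(S/t^2)\sum_i(\norm{B_{(i)}}/\norm{A_{(i)}})A_{(i)}^\top A_{(i)}$ (and its transpose analogue) and bound the resulting weighted sum of rank-one matrices using the product structure of $p_i$.

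For part~(\textit{i}), $\norm{M_k}$ is not a priori bounded, so I would restrict to the event $\mathcal{E}=\{\norm{A^\top s_k}\leq C\sqrt{\widetilde r}\norm{A},\,\norm{B^\top s_k}\leq C\sqrt{\widetilde r}\norm{B}$ for all $k\}$, which holds with probability at least $1-\exp(-\Omega(\widetilde r))$ by the Hanson--Wright inequality; this gives $L=O(\widetilde r\norm{A}\norm{B}/t)$ on $\mathcal{E}$, and the complement contributes to the failure probability. The variance is computed via the fourth-moment identity $\EE[s_i s_j s_a s_b]=\delta_{ij}\delta_{ab}+\delta_{ia}\delta_{jb}+\delta_{ib}\delta_{ja}-2\delta_{ijab}$, which yields
\[
\EE[A^\top s_k s_k^\top BB^\top s_k s_k^\top A]=\frobnorm{B}^2 A^\top A + 2A^\top BB^\top A - 2A^\top\mathrm{diag}(BB^\top)A,
\]
and combined with $\frobnorm{B}^2\leq\widetilde r\norm{B}^2$ and $\norm{A^\top\mathrm{diag}(BB^\top)A}\leq\norm{A}^2\norm{B}^2$ produces $\norm{\EE M_k M_k^\top}=O(\widetilde r\norm{A}^2\norm{B}^2/t^2)$, as required.

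The main obstacle is obtaining the \emph{linear} (rather than quadratic) dependence on $\widetilde r$ in the variance. For~(\textit{i}) the critical ingredient is the cancellation exposed by the fourth-moment identity above, which upgrades a crude $\widetilde r^2$ estimate into the trace factor $\frobnorm{B}^2\cdot\norm{A}^2$. For~(\textit{ii}) one must carefully exploit both the specific shape of $p_i\propto\norm{A_{(i)}}\norm{B_{(i)}}$ and the subtraction of $(A^\top B)(A^\top B)^\top/t^2$ coming from $\EE M_kM_k^\top=\EE X_kX_k^\top-(\EE X_k)(\EE X_k)^\top$; securing the sharp linear dependence in this asymmetric setting is precisely the difficulty that motivates the rank-aware Theorem~\ref{thm:chernoff:matrix_valued:low_rank}, which applied to the rank-one $X_k$ (after symmetric dilation) would recover the bound with a $\log(\widetilde r)$ factor in place of $\log(m+p)$.
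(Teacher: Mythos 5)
Your overall route---realize $\widetilde A^\top\widetilde B-A^\top B$ as a sum of $t$ independent centered matrices and invoke a matrix-valued Bernstein inequality---is exactly what the paper intends: its proof is omitted and it simply points to the Bernstein-type bounds, emphasizing that the ``variance information'' is what buys the linear dependence on $\widetilde r$. Your part~(\textit{i}) sketch is essentially right: the Rademacher fourth-moment identity and the resulting bound $\norm{\EE M_kM_k^\top}=O(\widetilde r\norm{A}^2\norm{B}^2/t^2)$ are correct. The only care needed there is the truncation: conditioning on $\mathcal{E}$ destroys exact centering, so you must work with truncated variables and control the (exponentially small) bias, and the threshold should be taken of order $\sqrt{\widetilde r+\log t}\,\norm{A}$ rather than $\sqrt{\widetilde r}\,\norm{A}$ so that the union bound over the $t$ samples stays below the claimed failure probability even when $\widetilde r$ is small; these are standard adjustments.

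The genuine gap is in part~(\textit{ii}): the variance bound your plan requires, $\norm{\EE X_kX_k^\top}=O(\widetilde r\norm{A}^2\norm{B}^2/t^2)$, is \emph{false} for the distribution $p_i\propto\norm{A_{(i)}}\norm{B_{(i)}}$, so ``bounding the weighted sum using the product structure of $p_i$'' cannot succeed as stated. Indeed $\EE X_kX_k^\top=\frac{S}{t^2}Q_A$ with $Q_A=\sum_i\frac{\norm{B_{(i)}}}{\norm{A_{(i)}}}A_{(i)}^\top A_{(i)}$; now normalize $\norm{A}=\norm{B}=1$ and take $A$ with $k$ rows equal to $e_1^\top/\sqrt{k}$ plus $k$ orthonormal rows orthogonal to $e_1$, and $B$ with $2k$ orthonormal rows. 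Then $\sr{A}=k+1$, $\sr{B}=2k$, so $\widetilde r=2k$, while $S=k+\sqrt{k}$ and $\norm{Q_A}=\sqrt{k}$, giving $\norm{\EE X_kX_k^\top}=\Theta\left(\widetilde r^{3/2}/t^2\right)$. The Cauchy--Schwarz estimates $S\leq\frobnorm{A}\frobnorm{B}$ and $\norm{Q_A}\leq\frobnorm{B}\norm{A}$ show $\widetilde r^{3/2}$ is indeed the worst case, so Bernstein with this decomposition only yields $t=\Omega(\widetilde r^{3/2}\log(m+p)/\eps^2)$, not the claimed linear bound. To get a linear variance proxy you would need a different sampling distribution, e.g.\ $p_i=\frac12\left(\norm{A_{(i)}}^2/\frobnorm{A}^2+\norm{B_{(i)}}^2/\frobnorm{B}^2\right)$, for which $\frac{\norm{B_{(i)}}^2}{p_i}A_{(i)}^\top A_{(i)}\preceq 2\frobnorm{B}^2A_{(i)}^\top A_{(i)}$ and hence $\norm{\EE X_kX_k^\top}\leq 2\widetilde r\norm{A}^2\norm{B}^2/t^2$ immediately; and your fallback via Theorem~\ref{thm:chernoff:matrix_valued:low_rank} proves the dimension-free bound $t=\Omega(\widetilde r\log(\widetilde r/\eps^2)/\eps^2)$ of Theorem~\ref{thm:matrixmult}~(\textit{ii}), which is a different statement from the $\widetilde r\log(m+p)/\eps^2$ bound asserted here (neither threshold dominates the other for all $\eps$). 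So part~(\textit{ii}) of your proposal, as written, does not go through.
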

 Notice that the above bounds depend linearly on the stable rank of the matrices and logarithmically on their dimensions. 
As we will see in the next theorem we can remove the dependency on the dimensions, and replace it with the stable rank. Recall that in most cases matrices \emph{do} have low stable rank, which is much smaller that their 
dimensionality.
\begin{theorem}\label{thm:matrixmult}
Let $0< \eps < 1/2$ and let $A\in{\RR^{n\times m}}$, $B\in{\RR^{ n\times p}}$
both having rank and stable rank at most $r$ and $\widetilde{r}$, respectively. The
following hold:
\begin{enumerate}[(i)]
 \item 
Let $R$ be a $t\times n$ random sign matrix rescaled by $1/\sqrt{t}$. Denote by $\widetilde{A}=RA$ and $\widetilde{B}=RB$.
\begin{enumerate}[(a)]
 \item 
 If $t=\Omega(r/\eps^{2} )$ then
\[ \mathbb{P}( \forall x\in\RR^m, y\in\RR^p, \  |x^\top (\widetilde{A}^\top \widetilde{B} - A^\top B)y|\]
\[ \leq \eps \norm{Ax} \norm{By}) \geq 1- e^{-\Omega(r)}.\]
 \item
If $t=\Omega(\widetilde{r}/\eps^4 )$ then
\[ \Prob{\norm{\widetilde{A}^\top \widetilde{B} - A^\top B} \leq \eps \norm{A} \norm{B}
} \geq 1- e^{-\Omega( \frac{\widetilde{r}}{\eps^2} ) }. \]
\end{enumerate}
\item
Let $p_i = \norm{A_{(i)}}\norm{B_{(i)}} /S$, where $S=\sum_{i=1}^{n}{\norm{A_{(i)}}\norm{B_{(i)}}}$ be a probability distribution over $[n]$. If we form a $t\times m$ matrix $\widetilde{A}$ and a $t\times p$ matrix $\widetilde{B}$ by
taking $t=\Omega(\widetilde{r}  \log ( \widetilde{r}/\eps^2)  /\eps^2)$ i.i.d. (row indices) samples from $p_i$, then
\[ \Prob{\norm{\widetilde{A}^\top \widetilde{B} - A^\top B} \leq \eps \norm{A} \norm{B}
} \geq 1- \frac1{\poly{\widetilde{r}}}. \]
\end{enumerate}
\end{theorem}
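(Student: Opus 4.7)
The plan proceeds in three parts, mirroring the three bounds of the theorem; each exploits a different probabilistic ingredient.

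\textbf{Part (i.a).} Observe that $x^\top(\widetilde{A}^\top\widetilde{B} - A^\top B)y = \langle RAx, RBy\rangle - \langle Ax, By\rangle$, a quantity that depends only on the vectors $Ax$ and $By$, both of which lie in the subspace $\mathcal{W} := \mathrm{col}(A) + \mathrm{col}(B)$ of dimension at most $2r$. Apply Lemma~\ref{lem:jl_subspace} to $\mathcal{W}$: for $t = \Omega(r/\eps^2)$ with a sufficiently large hidden constant, the subspace JL event holds with probability at least $1 - c_2^{2r}\exp(-c_1\eps^2 t) \geq 1 - e^{-\Omega(r)}$. On this event, the polarization identity $4\langle Ru, Rv\rangle = \norm{R(u+v)}^2 - \norm{R(u-v)}^2$, combined with the identical polarization for $\langle u,v\rangle$, converts the $(1\pm\eps)$-isometry on $\mathcal{W}$ into $|\langle Ru, Rv\rangle - \langle u, v\rangle| \leq \eps\norm{u}\norm{v}$ (after absorbing a constant into $\eps$). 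Specializing to $u = Ax$, $v = By$ gives the claim.

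\textbf{Part (i.b).} Reduce to (i.a) by truncation. Set $k = \Theta(\widetilde{r}/\eps^2)$ and decompose $A = A_k + E_A$, $B = B_k + E_B$. From $\sigma_{k+1}(A)^2 \leq \frobnorm{A}^2/k \leq \widetilde{r}\norm{A}^2/k = \eps^2\norm{A}^2$ we obtain $\norm{E_A} \leq \eps\norm{A}$ and $\frobnorm{E_A} \leq \sqrt{\widetilde{r}}\norm{A}$, and analogously for $B$. Bilinearity yields
\[ \widetilde{A}^\top\widetilde{B} - A^\top B = \sum_{(X,Y)} \bigl(\widetilde{X}^\top\widetilde{Y} - X^\top Y\bigr), \]
where the sum runs over the four pairs $(X,Y) \in \{A_k, E_A\}\times\{B_k, E_B\}$. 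The ``top--top'' pair is bounded by $\eps\norm{A}\norm{B}$ via part (i.a) applied to the rank-$\leq k$ matrices $A_k, B_k$ with $t = \Omega(k/\eps^2) = \Omega(\widetilde{r}/\eps^4)$. For each residual pair I would combine (a) the subspace-JL isometry of $R$ on $\mathrm{col}(A_k)\cup\mathrm{col}(B_k)$, which gives $\norm{RA_k},\norm{RB_k} \leq (1+\eps)\norm{A},\norm{B}$, with (b) a standard operator-norm tail bound of the form $\norm{RE} \lesssim \norm{E} + \frobnorm{E}/\sqrt{t}$ for random sign projections of an arbitrary matrix. The choice $t = \Omega(\widetilde{r}/\eps^4)$ forces $\frobnorm{E_A}/\sqrt{t} \leq \eps^2\norm{A}$, hence $\norm{RE_A} \lesssim \eps\norm{A}$ and similarly for $E_B$; each residual pair contributes $O(\eps\norm{A}\norm{B})$, and a final rescaling of $\eps$ closes the argument.

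\textbf{Part (ii).} Apply Theorem~\ref{thm:chernoff:matrix_valued:low_rank} to a symmetrized rank-$2$ ensemble. Writing $\widetilde{A}^\top\widetilde{B} = \tfrac{1}{t}\sum_{i=1}^{t} Z_i$ with $Z_i := A_{(s_i)}^\top B_{(s_i)}/p_{s_i}$ and $s_i$ drawn i.i.d.\ from $p$, we have $\EE Z_i = A^\top B$. Form the symmetric dilation
\[ N_i := \begin{pmatrix} 0 & Z_i \\ Z_i^\top & 0 \end{pmatrix}, \qquad \rank{N_i} \leq 2. \]
By the definition of $p_i$, $\norm{N_i} = \norm{Z_i} = \norm{A_{(s_i)}}\norm{B_{(s_i)}}/p_{s_i} = S$, and Cauchy--Schwarz together with the stable-rank hypothesis give $S \leq \frobnorm{A}\frobnorm{B} \leq \widetilde{r}\norm{A}\norm{B}$. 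Hence the normalized samples $\widehat{N}_i := N_i/(\norm{A}\norm{B})$ satisfy $\norm{\widehat{N}_i} \leq \widetilde{r}$ almost surely and $\norm{\EE\widehat{N}_i} = \norm{A^\top B}/(\norm{A}\norm{B}) \leq 1$. Invoking Theorem~\ref{thm:chernoff:matrix_valued:low_rank} with $\gamma = \widetilde{r}$ and sample rank $2 \leq t$ yields the required sample complexity $t = \Omega(\widetilde{r}\log(\widetilde{r}/\eps^2)/\eps^2)$ and failure probability $1/\poly{t}$.

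The delicate step is the residual analysis in (i.b): since $E_A$ and $E_B$ may well be of full rank, Lemma~\ref{lem:jl_subspace} does not apply to $\mathrm{col}(E_A)$ directly, so one must import a separate high-probability bound on $\norm{RE}$ that captures the Frobenius mass with a $1/\sqrt{t}$ damping. The factor $\eps^4$ (rather than $\eps^2$) in the sample complexity of (i.b) is precisely what is needed to force this Frobenius contribution below the desired operator-norm error.
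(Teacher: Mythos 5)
Your proposal is correct and matches the paper's proof in all three parts: (i.a) via the subspace JL lemma applied to the at-most-$2r$-dimensional span of the column spaces plus polarization; (i.b) via the same truncation at rank $\Theta(\widetilde{r}/\eps^2)$, with the head handled by (i.a) and the tail by a bound of the form $\norm{RE}\lesssim \norm{E}+\frobnorm{E}/\sqrt{t}$, which is exactly the paper's Lemma~\ref{lem:Rudelson} (proved by Gordon--Chev\`et plus Talagrand concentration); and (ii) via the rank-two symmetric dilation fed into Theorem~\ref{thm:chernoff:matrix_valued:low_rank} with $\gamma=\widetilde{r}$. So the argument is essentially identical to the paper's, differing only in that the paper states and proves the $\norm{RE}$ bound explicitly rather than citing it as standard.
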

\begin{remark}
In part (\textit{ii}), we can actually achieve the \emph{stronger} bound of $t=\Omega(\sqrt{\sr{A}\sr{B}}\log ( \sr{A}$ $\sr{B} /\eps^4)  /\eps^2)$ (see proof). However, for ease of presentation and comparison we give the above displayed bound.
\end{remark}
Part (\textit{i.b}) follows from (\textit{i.a}) via a simple truncation argument. This was pointed out to us by Mark Rudelson~(personal communication). To understand the significance and the differences between the different components of this theorem, we first note that the probabilistic event of part (\textit{i.a}) is superior to the probabilistic event of (\textit{i.b}) and (\textit{ii}). Indeed, when $B=A$ the former implies that $|x^\top (\widetilde{A}^\top \widetilde{A} - A^\top A) x| < \eps \cdot x^\top A^\top A x$ for every $x$, which is stronger than $\norm{\widetilde{A}^\top \widetilde{A} - A^\top A} \leq \eps \norm{A}^2$. We will \emph{heavily} exploit this fact in Section~\ref{sec:app:spectral} to prove Theorem~\ref{thm:lowrank} (\textit{i.a}) and (\textit{ii}). Also notice that part (\textit{i.b}) is essential computationally inferior to (\textit{ii}) as it gives the same bound while it is more expensive computationally to multiply the matrices by random sign matrices than just sampling their rows. However, the advantage of part (\textit{i}) is that the sampling process is \emph{oblivious}, i.e., does not depend on the input matrices.
\ignore{
We also note that the special case of part (\textit{ii}) where $A=B$ is precisely ~\cite[Theorem~3.1]{lowrank:rankone:VR}. 
In its present generality this theorem is tight as can be seen by the reduction of~\cite[Theorem~2.8]{CW_stoc09} 
\footnote{This reduction deals with the Frobenius norm and so applicable here as always $\norm{\cdot} \leq  \frobnorm{\cdot}$}. 
 However, we don't know if this bound holds in the special case of $B=A$. 
In a nutshell, the importance of deriving tights bounds for approximate matrix multiplication lies on the fact that in 
many linear algebraic problems are, after manipulations, reduced to primitive problems including matrix multiplication.
}
\subsection{$\ell_2$-regression}\label{sec:apps:l2_regression}
In this section we present an approximation algorithm for the least-squares
regression problem; given an $n\times m$, $n>m$, real matrix $A$ of rank $r$ and a real
vector $b\in\RR^n$ we want to compute $x_{\text{opt}}=\pinv{A}b$ that minimizes
$\norm{Ax-b}$ over all $x\in\RR^m$. In their seminal paper~\cite{l2_regression:drineas06}, Drineas et al. show that if we
non-uniformly sample $t=\Omega(m^2/\eps^2)$ rows from $A$ and $b$, then with high probability the optimum 
solution of the $t\times d$ sampled problem will be within $(1+\eps)$ close to the original problem. 
The main drawback of their approach is that finding or even approximating
the sampling probabilities is computationally intractable.
Sarlos~\cite{sarlos} improved the above to 
$t=\Omega( m\log m/\eps^2)$ and gave the first $o(nm^2)$ 
relative error approximation algorithm for this problem. 

In the next theorem we eliminate the extra $\log m$ factor from Sarlos bounds, 
and more importantly, replace the dimension (number of variables) $m$ with the 
rank $r$ of the constraints matrix $A$.
We should point out that independently, the same bound as our 
Theorem~\ref{thm:ell2_regression} was recently obtained by Clarkson and
Woodruff~\cite{CW_stoc09} (see also~\cite{l2_regression:drineas:faster}). 
The proof of Clarkson and Woodruff uses heavy machinery and a completely 
different approach. In a nutshell they manage to improve the matrix multiplication bound with respect to the Frobenius norm. They achieve
this by bounding higher moments of the Frobenius norm of the approximation 
viewed as a random variable instead of bounding the \emph{local} differences for
each coordinate of the product. To do so, they rely on intricate moment
calculations spanning over four pages, see~\cite{CW_stoc09} for more. On the other hand, the proof of the
present $\ell_2$-regression bound uses only basic matrix analysis, elementary
deviation bounds and $\eps$-net arguments. More precisely, we argue that Theorem~\ref{thm:matrixmult} (\textit{i.a}) immediately 
implies that by randomly-projecting to dimensions linear in the intrinsic dimensionality of the constraints, i.e., the rank of $A$, is
sufficient as the following theorem indicates.
\begin{theorem}\label{thm:ell2_regression}
Let $A\in{\RR^{n\times m}}$ be a real matrix of rank $r$ and $b\in\RR^n$. Let $\min_{x\in\RR^m} \norm{b-Ax}$ be the $\ell_2$-regression problem, where the minimum is achieved with $x_{opt}=\pinv{A}b$. Let $0<\eps<1/3$, $R$ be a $t\times n$ random sign matrix rescaled by $1/\sqrt{t}$ and $\widetilde{x}_{opt}=\pinv{(RA)} Rb$. 
\begin{itemize}
 \item 
If $t=\Omega(r/\eps)$, then with high probability,
\begin{equation}\label{ineq:regression:approx}
 \norm{b-A\widetilde{x}_{opt}} \leq (1+\eps) \norm{b-Ax_{opt}}.
\end{equation}
\item
If $t=\Omega(r/\eps^2)$, then with high probability,
\begin{equation}\label{ineq:regression:x_opt}
 \norm{x_{opt} - \widetilde{x}_{opt}} \leq
\dfrac{\eps}{\sigma_{\min}(A)}\norm{b-Ax_{opt}}.
\end{equation}
\end{itemize}
\end{theorem}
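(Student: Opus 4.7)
The plan is to follow the sketch-and-solve paradigm, reducing everything to a single application of Theorem~\ref{thm:matrixmult} (\textit{i.a}) on a carefully chosen $(r+1)$-dimensional subspace. Write the thin SVD $A = U\Sigma V^\top$ with $U\in\RR^{n\times r}$, and let $b^\perp := b - Ax_{opt} = (I - UU^\top)b$ be the part of $b$ orthogonal to the column space of $A$ (the case $b^\perp = 0$ is trivial). Form the $n\times(r+1)$ matrix $W := [\,U \mid b^\perp/\norm{b^\perp}\,]$, whose columns are orthonormal, so $W^\top W = I_{r+1}$.

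I would then apply Theorem~\ref{thm:matrixmult} (\textit{i.a}) with both input matrices equal to $W$ (rank $r+1$) and accuracy parameter $\eps'$; this requires $t = \Omega(r/\eps'^2)$ rows. The conclusion is that $|x^\top(W^\top R^\top R W - I_{r+1})y| \le \eps'\norm{x}\norm{y}$ for all $x,y\in\RR^{r+1}$, whence
\begin{equation*}
\norm{U^\top R^\top R U - I_r} \le \eps' \quad\text{and}\quad \norm{U^\top R^\top R b^\perp} \le \eps'\norm{b^\perp},
\end{equation*}
by restricting $x,y$ to the first $r$ coordinates, respectively by taking $y = e_{r+1}$. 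For $\eps' < 1$ this also guarantees that $RU$ has full column rank $r$.

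Next I would use the normal equations for the sketched problem, namely $A^\top R^\top R A(\widetilde{x}_{opt} - x_{opt}) = A^\top R^\top R b^\perp$. Since $RA = (RU)\Sigma V^\top$ has the same row space as $A$ (namely $\mathrm{span}(V)$) whenever $RU$ has full column rank, both $x_{opt} = \pinv{A}b$ and $\widetilde{x}_{opt} = \pinv{(RA)}Rb$ lie in $\mathrm{row}(A)$; write the difference as $\widetilde{x}_{opt} - x_{opt} = Vw$ for some $w \in \RR^r$. Substituting $A = U\Sigma V^\top$ and cancelling the invertible left factor $V\Sigma$ yields
\begin{equation*}
(U^\top R^\top R U)\,\Sigma w \;=\; U^\top R^\top R b^\perp.
\end{equation*}
Inverting $U^\top R^\top R U$ and combining the two estimates above gives $\norm{\Sigma w} \le \frac{\eps'}{1-\eps'}\norm{b^\perp}$. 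Since $\norm{A(\widetilde{x}_{opt}-x_{opt})} = \norm{U\Sigma w} = \norm{\Sigma w}$ and $\norm{\widetilde{x}_{opt} - x_{opt}} = \norm{w} \le \norm{\Sigma w}/\sigma_{\min}(A)$, this yields both residual and parameter bounds of the form $\tfrac{\eps'}{1-\eps'}\norm{b^\perp}$ (scaled by $1/\sigma_{\min}(A)$ in the latter case).

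To conclude Eq.~\eqref{ineq:regression:approx}, I would apply the Pythagorean theorem, using that $b^\perp \perp \mathrm{col}(A)$:
\begin{equation*}
\norm{b - A\widetilde{x}_{opt}}^2 = \norm{b^\perp}^2 + \norm{A(\widetilde{x}_{opt} - x_{opt})}^2 \le \bigl(1 + O(\eps'^2)\bigr)\norm{b^\perp}^2,
\end{equation*}
so choosing $\eps' = \Theta(\sqrt{\eps})$ yields the $(1+\eps)$ bound with only $t = \Omega(r/\eps)$ rows; Eq.~\eqref{ineq:regression:x_opt} follows immediately by instead choosing $\eps' = \Theta(\eps)$, which costs $t = \Omega(r/\eps^2)$. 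The main obstacle I anticipate is packaging both the self-term estimate on $U^\top R^\top R U - I_r$ and the cross-term estimate on $U^\top R^\top R b^\perp$ into a \emph{single} invocation of Theorem~\ref{thm:matrixmult} (\textit{i.a}) via the enriched subspace $W$, and justifying that $\widetilde{x}_{opt} - x_{opt}$ lies in $\mathrm{row}(A)$ so that $\sigma_{\min}(A)$ can be used to translate the residual bound into the desired parameter bound.
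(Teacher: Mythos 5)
Your proposal is correct and follows essentially the same route as the paper's proof: a Pythagorean split of $\norm{b-A\widetilde{x}_{opt}}^2$, the sketched normal equations reduced (after cancelling the invertible factor) to an equation driven by $U^\top R^\top R U$ and the cross term $U^\top R^\top R\, b^\perp$, both controlled via Theorem~\ref{thm:matrixmult} (\textit{i.a}), with $\eps'=\Theta(\sqrt{\eps})$ versus $\eps'=\Theta(\eps)$ giving the two stated sample bounds. The only cosmetic difference is that you obtain the near-isometry on $\mathrm{colspan}(U)$ and the cross-term bound from a single invocation on the enriched orthonormal matrix $W=[\,U \mid b^\perp/\norm{b^\perp}\,]$, whereas the paper invokes the theorem twice (with $A=B=U$ and with $A=U$, $B=w$), which rests on the same underlying subspace JL event.
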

\begin{remark}
The above result can be easily generalized to the case where $b$ is an $n\times p$ matrix $B$ of rank at most $r$ (see proof). This is known as the generalized $\ell_2$-regression problem in the literature, i.e., $\arg\min_{X\in{m\times p}}\norm{AX-B}$ where $B$ is an $n\times p$ rank $r$ matrix.
\end{remark}
\subsection{Spectral Low Rank Matrix Approximation}\label{sec:apps:low_rank}
A large body of work on low rank matrix approximations~\cite{lra:drineaskannan,lowrank:FKV,lowrank:volume_sampling:DRVW06,sarlos,lowrank:rankone:VR,matrix:sparsification:optas,low_rank:Tygert2008,CW_stoc09,low_rank:STOC09,matrix:survey:HMT09} 
has been recently developed with main objective to develop more efficient algorithms
for this task. Most of these results study approximation algorithms with respect to the Frobenius norm, except for
\cite{lowrank:rankone:VR, low_rank:STOC09} that handle the spectral norm.

In this section we present two $(1+\eps)$-relative-error approximation algorithms for this problem with
 respect to the spectral norm, i.e., given an $n\times m$, $n> m$, real matrix $A$ of rank $r$, 
 we wish to compute $A_k=U_k \Sigma_k V^\top_k$, which minimizes $\norm{A-X_k}$ over the set of
$n\times m$ matrices of rank $k$, $X_k$. The first additive bound for this problem was obtained in~\cite{lowrank:rankone:VR}. 
To the best of our knowledge the best relative bound was recently achieved in~\cite[Theorem~1]{low_rank:STOC09}. 
The latter result is not directly comparable with ours, since it uses a more restricted projection methodology and 
so their bound is weaker compared to our results. The first algorithm randomly projects the rows of the 
input matrix onto $t$ dimension. Here, we set $t$ to be either $\Omega(r/\eps^2)$ in which case we get
an $(1+\eps)$ error guarantee, or to be $\Omega(k/\eps^2)$ in which case we show a $(2+\eps\sqrt{(r-k)/k})$ error approximation. In both cases the algorithm succeeds with high probability. 
The second approximation algorithm samples non-uniformly $\Omega(r \log (r/\eps^2) /\eps^2)$ rows from $A$ 
in order to satisfy the $(1+\eps)$ guarantee with high probability.

The following lemma (Lemma~\ref{lem:rayleigh_implies_lowrank}) is essential for
proving both relative error bounds of Theorem~\ref{thm:lowrank}. It gives a sufficient condition
that any matrix $\widetilde{A}$ should satisfy in order to get a $(1+\eps)$
spectral low rank matrix approximation of $A$ for \emph{every} $k$, $1\leq k \leq
\rank{A}$. 
\begin{lemma}\label{lem:rayleigh_implies_lowrank}
Let $A$ be an $n \times m$ matrix and $\eps>0$. If there exists a $t\times m$
matrix $\widetilde{A}$ such that for every $x\in{\RR^m}$, $(1-\eps)x^\top A^\top A x
\leq x^\top \widetilde{A}^\top \widetilde{A} x \leq (1+\eps) x^\top A^\top A x$, then
\begin{equation*}
 \norm{A- P_{\widetilde{A},k}(A)} \leq (1+\eps) \norm{A - A_k},
\end{equation*}
for \emph{every} $k=1,\dots, \rank{A}$.
\end{lemma}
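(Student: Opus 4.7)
The plan is to show that the two-sided Rayleigh quotient hypothesis is already strong enough to force the row space of $\widetilde{A}$ to coincide exactly with the row space of $A$, after which the conclusion reduces to a direct application of the Eckart--Young theorem to $A$ itself.

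First I would extract the kernel equality $\ker(\widetilde{A}) = \ker(A)$ from the hypothesis. For any $x$ with $Ax = 0$, the upper bound $x^\top \widetilde{A}^\top \widetilde{A} x \leq (1+\eps)\, x^\top A^\top A x = 0$ forces $\widetilde{A}x = 0$. Conversely, for any $x$ with $\widetilde{A}x = 0$, the lower bound $(1-\eps)\norm{Ax}^2 \leq \norm{\widetilde{A}x}^2 = 0$ together with $\eps<1$ forces $Ax = 0$. Taking orthogonal complements in $\RR^m$, the row spaces of $A$ and $\widetilde{A}$ therefore coincide.

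Next I would use the standard SVD identity $\pinv{\widetilde{A}}\widetilde{A} = V_{\widetilde{A}} V_{\widetilde{A}}^\top$ to identify this $m \times m$ matrix with the orthogonal projector of $\RR^m$ onto the row space of $\widetilde{A}$, which by the previous step is the same subspace as the row space of $A$. Because every row of $A$ already lies in its own row space, right-multiplication of $A$ by this projector acts as the identity on the rows of $A$, and so $P_{\widetilde{A}}(A) = A\,\pinv{\widetilde{A}}\widetilde{A} = A$.

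Finally, by definition $P_{\widetilde{A},k}(A)$ is the best rank-$k$ approximation of $P_{\widetilde{A}}(A) = A$; by Eckart--Young this best approximation is $A_k$ itself (a common minimizer of both the spectral and Frobenius approximation error). Consequently, for every $k \in \{1, \dots, \rank{A}\}$ the claimed inequality collapses to $\norm{A - A_k} \leq (1+\eps)\norm{A - A_k}$, which holds trivially since $\eps > 0$. There is no real technical obstacle here: the heart of the argument is the single observation that the two-sided Rayleigh quotient condition, with $\eps<1$, is already rank-preserving. The substance of the lemma lies not in its short proof but in how it is deployed---combined with Theorem~\ref{thm:matrixmult}(\textit{i.a}) applied with $B=A$, it converts an approximate matrix multiplication guarantee directly into a $(1+\eps)$-relative spectral low-rank approximation, uniformly over all $k$.
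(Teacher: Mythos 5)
Your proof is correct in the regime where the lemma is actually true and used, but it is a genuinely different argument from the paper's, and it is worth spelling out what each buys. You observe that the two-sided Rayleigh hypothesis with $\eps<1$ forces $\ker(\widetilde{A})=\ker(A)$, hence equality of row spaces, hence $P_{\widetilde{A}}(A)=A\pinv{\widetilde{A}}\widetilde{A}=A$ and $P_{\widetilde{A},k}(A)=A_k$, so under the paper's literal definition of $P_{\widetilde{A},k}$ the conclusion holds with constant $1$, not just $1+\eps$. Two caveats. First, the lemma as stated only assumes $\eps>0$, and your kernel step genuinely needs $\eps<1$ (for $\eps\geq 1$ the matrix $\widetilde{A}=0$ satisfies the hypothesis and the conclusion fails for $k=\rank{A}$); this is really a defect of the statement rather than of your proof, since the paper's own argument also implicitly assumes $\eps<1$ (it invokes Lemma~\ref{lem:rayleight_to_eig}, which requires $0<\eps<1$, and divides by $1-\eps$), and all applications take $\eps<1/3$. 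Second, and more substantively, the paper's proof does not exploit exact row-space recovery: it compares the spectra of $A^\top A$ and $\widetilde{A}^\top\widetilde{A}$ via Courant--Fischer and bounds $\norm{A-A\widetilde{\Pi}_k}$, where $\widetilde{\Pi}_k$ is the projector onto the top-$k$ \emph{right singular vectors of the sketch} $\widetilde{A}$. That quantity is the operationally meaningful one (it is what you can compute from $\widetilde{A}$, and it is the quantity reused in the proofs of Theorem~\ref{thm:lowrank} and in the style of argument behind Theorem~\ref{thm:lowrank:low_stable_tail}), it is \emph{not} equal to $\norm{A-A_k}$ in general even when the row spaces coincide, and its $(1+\eps)$ bound does not follow from your exactness observation. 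So your route is shorter and shows the lemma, read literally, is degenerate; the paper's route proves the stronger, non-trivial inequality $\norm{A-A\widetilde{\Pi}_k}\leq(1+\eps)\norm{A-A_k}$ and is robust to situations where the kernel equality is lost, which is precisely the robustness the rest of the paper cares about.
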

The theorem below shows that it's possible to satisfy the conditions of
Lemma~\ref{lem:rayleigh_implies_lowrank} by randomly projecting $A$ onto
$\Omega(r/\eps^2)$ or by non-uniform sampling i.i.d. $\Omega(r \log(r/\eps^2) /\eps^2)$
rows of $A$ as described in parts (\textit{i.a}) and (\textit{ii}), respectively.
\begin{theorem}\label{thm:lowrank}
Let $0<\eps <1/3$ and let  $A=U\Sigma V^\top$ be a real $n \times m$ matrix of
rank $r$ with $n\geq m$.
\begin{enumerate}[(i)]
\item 
\begin{enumerate}[(a)]
\item
Let $R$ be a $t\times n$ random sign matrix rescaled by $1/\sqrt{t}$ and set $\widetilde{A}=RA$. If $t=\Omega(r/\eps^2)$, then with high probability
\[	\norm{A-P_{\widetilde{A},k}(A)} \leq (1+\eps) \norm{A-A_k},\]
for \emph{every} $k=1,\dots,r$.
\item
Let $R$ be a $t\times n$ random Gaussian matrix rescaled by $1/\sqrt{t}$ and set $\widetilde{A}=RA$. If $t=\Omega(k/\eps^2)$, then with high probability
\[	\norm{A-P_{\widetilde{A},k}(A)} \leq (2+\eps \sqrt{\frac{r-k}{k}}) \norm{A-A_k}.\]
\end{enumerate}
\item
Let $p_i=\norm{U_{(i)}}^2 /r$ be a probability distribution over $[n]$. Let
$\widetilde{A}$ be a $t\times m$ matrix that is formed (row-by-row) by taking $t$
i.i.d. samples from $p_i$ and rescaled appropriately. If $t=\Omega(r 
\log (r/\eps^2) /\eps^{2})$, then with high probability
\[	\norm{A-P_{\widetilde{A},k}(A) } \leq (1+\eps) \norm{A-A_k},\]
for \emph{every} $k=1,\dots,r$.
\end{enumerate}
\end{theorem}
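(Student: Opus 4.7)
My plan is to handle parts (\textit{i.a}) and (\textit{ii}) uniformly via Lemma~\ref{lem:rayleigh_implies_lowrank} and to treat part (\textit{i.b}) separately, since the sketching dimension $t=\Omega(k/\eps^2)$ there can be much smaller than $r$, precluding any per-direction approximation of $A^\top A$. For (\textit{i.a}) and (\textit{ii}) the entire task reduces to producing a sketch $\widetilde{A}$ with
\begin{equation*}
    (1-\eps)\,x^\top A^\top A x \;\leq\; x^\top \widetilde{A}^\top \widetilde{A} x \;\leq\; (1+\eps)\,x^\top A^\top A x \quad \forall x\in\RR^m.
\end{equation*}

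For part (\textit{i.a}), this is immediate from Theorem~\ref{thm:matrixmult}~(\textit{i.a}) applied with $B=A$ and $x=y$: the event there states exactly $|x^\top(\widetilde{A}^\top \widetilde{A}-A^\top A)x|\leq \eps\,\|Ax\|^2$ for all $x$, which is the hypothesis of the lemma. For part (\textit{ii}), I would not invoke Theorem~\ref{thm:matrixmult}~(\textit{ii}) directly (the sampling probabilities $\|A_{(i)}\|\|B_{(i)}\|/S$ are replaced here by the leverage scores $p_i=\|U_{(i)}\|^2/r$), but instead appeal to Theorem~\ref{thm:chernoff:matrix_valued:low_rank}. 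Writing $A_{(i)}=U_{(i)}\Sigma V^\top$ and sampling a row index $i_j$ with probability $p_{i_j}$, set
\begin{equation*}
    Y_j \;=\; \tfrac{1}{p_{i_j}}\,\Sigma^{-1} V^\top A_{(i_j)}^\top A_{(i_j)} V \Sigma^{-1} \;=\; \tfrac{1}{p_{i_j}}\,U_{(i_j)}^\top U_{(i_j)}.
\end{equation*}
These are rank-one (hence rank at most $t$) symmetric matrices in $\RR^{r\times r}$ with $\EE Y_j=U^\top U=I_r$ and, thanks to the leverage-score choice of $p_{i_j}$, $\|Y_j\|=r/\|U_{(i_j)}\|^2\cdot\|U_{(i_j)}\|^2=r$ almost surely. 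Theorem~\ref{thm:chernoff:matrix_valued:low_rank} with $\gamma=r$ and $t=\Omega(r\log(r/\eps^2)/\eps^2)$ yields $\|\frac1t\sum_j Y_j - I_r\|\leq \eps$ with probability $1-1/\poly{t}$. Pulling this back through $V\Sigma^{-1}$ verifies the Rayleigh-quotient bound on the row space of $A$, and since the rows of $\widetilde{A}$ are rescaled rows of $A$, both $\widetilde{A}y$ and $Ay$ vanish off the row space, so the bound extends trivially to all $y\in\RR^m$.

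For part (\textit{i.b}), the strategy is a two-term decomposition:
\begin{equation*}
    \|A-P_{\widetilde{A},k}(A)\|\;\leq\;\|A-P_{\widetilde{A}}(A)\|+\|P_{\widetilde{A}}(A)-P_{\widetilde{A},k}(A)\|.
\end{equation*}
The second term equals $\sigma_{k+1}(P_{\widetilde{A}}(A))\leq \sigma_{k+1}(A)=\|A-A_k\|$ because $P_{\widetilde{A}}$ is an orthogonal projection. For the first, I would exploit rotational invariance of the Gaussian matrix $R$ and the decomposition $A=U_k\Sigma_k V_k^\top+U_{r-k}\Sigma_{r-k}V_{r-k}^\top$: splitting $R$ according to its action on $U_k$ and $U_{r-k}$ yields two independent Gaussian blocks $R_1\in\RR^{t\times k}$, $R_2\in\RR^{t\times(r-k)}$. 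A standard pseudo-inverse identity (of the Halko--Martinsson--Tropp flavor) gives $\|A-P_{\widetilde{A}}(A)\|\leq \|\Sigma_{r-k}\|+\|\Sigma_{r-k}R_2^\top R_1 (R_1^\top R_1)^{-1}\|$. The first term contributes the leading ``1'' in the final bound; the second, after using $\|R_1^\dagger\|=O(1/\sqrt{t-k})$ for a $t\times k$ Gaussian with $t\gg k$ and the Frobenius bound on $\Sigma_{r-k}R_2^\top$, produces $O(\eps)\cdot\sqrt{(r-k)/k}\,\sigma_{k+1}$. Combining the two terms gives the claimed $(2+\eps\sqrt{(r-k)/k})\|A-A_k\|$.

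The main obstacle is the sharp Gaussian analysis required for (\textit{i.b}): unlike (\textit{i.a}) and (\textit{ii}), which are plug-and-play consequences of the tools already developed, (\textit{i.b}) requires careful control of the pseudo-inverse of a thin Gaussian block and concentration of the Gaussian cross-term with the correct $\sqrt{(r-k)/k}$ dependence. I would expect the concentration of $\|R_1^\dagger\|$ and of $\|R_2^\top R_1(R_1^\top R_1)^{-1}\|$ (via a subspace JL argument on the column space of $R_1$ combined with Gaussian tail bounds on $R_2^\top$) to be the technically most delicate step.
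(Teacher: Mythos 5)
Your parts (\textit{i.a}) and (\textit{ii}) are correct and are essentially the paper's own proofs. For (\textit{i.a}) the paper does exactly what you do: Theorem~\ref{thm:matrixmult}~(\textit{i.a}) with $B=A$ feeds the Rayleigh-quotient hypothesis of Lemma~\ref{lem:rayleigh_implies_lowrank}. For (\textit{ii}) the paper applies Theorem~\ref{thm:chernoff:matrix_valued:low_rank} (or \cite[Theorem~3.1]{lowrank:rankone:VR}) to the rank-one samples built from the projector $\Pi=UU^\top$, with $\gamma=r$, and then converts $\norm{\Pi S^\top S\Pi-\Pi}\le\eps$ into the hypothesis of the lemma; your samples $Y_j=\frac1{p_{i_j}}U_{(i_j)}^\top U_{(i_j)}$ are those same matrices written in the $r$-dimensional coordinates of the row space, and your pull-back through $V\Sigma^{-1}$ is the same conversion. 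Same $\gamma$, same sample size, same failure probability; the difference is purely cosmetic.

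Part (\textit{i.b}) is where there is a genuine gap, though your skeleton is sound: your decomposition plus the identity for $\norm{A-P_{\widetilde{A}}(A)}$ is in substance a re-derivation of the structural bound the paper simply cites as Lemma~\ref{lem:low_rank:STOC09} (i.e.\ \cite[Theorem~9.1]{matrix:survey:HMT09} transposed), and the steps $\norm{P_{\widetilde{A}}(A)-P_{\widetilde{A},k}(A)}=\sigma_{k+1}(P_{\widetilde{A}}(A))\le\norm{A-A_k}$ and the independence of $R_1=RU_k$, $R_2=RU_{r-k}$ by rotational invariance are fine. The problem is the tool you name for the cross term. The $1/\sqrt{t}$ rescalings cancel inside $\Sigma_{r-k}R_2^\top R_1(R_1^\top R_1)^{-1}$, and $\norm{G_1(G_1^\top G_1)^{-1}}=\norm{\pinv{G_1}}\approx 1/\sqrt{t}$ for the unrescaled $t\times k$ block, so a ``Frobenius bound on $\Sigma_{r-k}R_2^\top$'' yields roughly $\frobnorm{\Sigma_{r-k}}\le\sqrt{r-k}\,\sigma_{k+1}$, which misses the target $\eps\sqrt{(r-k)/k}\,\sigma_{k+1}\approx\sqrt{(r-k)/t}\,\sigma_{k+1}$ by a factor $\sqrt{t}$; a Chevet-type bound $\norm{\Sigma_{r-k}}\sqrt{t}+\frobnorm{\Sigma_{r-k}}$ instead leaves an extra additive $\sigma_{k+1}$, i.e.\ constant $3$ rather than $2$. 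What is actually needed --- and what the paper does --- is an operator-norm estimate on the Gaussian block itself: $\sqrt{t}\,R_2=GU_{r-k}$ has i.i.d.\ standard Gaussian entries, and Theorem~\ref{thm:subgaussian_norm} gives $\norm{R_2}\lesssim\sqrt{(r-k)/t}=O(\eps)\sqrt{(r-k)/k}$, whence $\norm{\Sigma_{r-k}R_2^\top R_1(R_1^\top R_1)^{-1}}\le\norm{\Sigma_{r-k}}\,\norm{R_2}\,\norm{\pinv{R_1}}\lesssim\eps\sqrt{(r-k)/k}\,\norm{A-A_k}$, with $\norm{\pinv{R_1}}=O(1)$ for the rescaled block when $t=\Omega(k/\eps^2)$ (your $O(1/\sqrt{t-k})$ is the unrescaled normalization; that slip washes out, but keep conventions consistent). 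Note also that the clean $\sqrt{r-k}$ estimate, in the paper as in your fix, is stated for the regime $r-k\ge t$; outside it both arguments pick up an extra $O(1)$, a looseness you would share with the paper. The paper controls $\norm{\pinv{(RU_k)}}$ via the subspace JL event on $U_k$ rather than Gaussian singular-value bounds, but that difference is immaterial; the missing ingredient in your sketch is the spectral, not Frobenius, control of $RU_{r-k}$.
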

We should highlight that in part (\textit{ii}) the probability distribution $p_i$ is in
general hard to compute. Indeed, computing $\norm{U_{(i)}}^2$ requires computing the SVD of $A$. In general, these values are known as statistical leverage scores~\cite{matrix:leverage_scores:drineas}. In the special case where $A$ is an edge-vertex matrix of an
undirected weighted graph then $p_i$, the probability distribution over edges
(rows), corresponds to the effective-resistance of the $i$-th
edge~\cite{graph:sparsifiers:eff_resistance}. 

Theorem~\ref{thm:lowrank} gives an $(1+\eps)$ approximation algorithm for the special case of low rank matrices. However, as discussed in Section~\ref{sec:tools} such an assumption is too restrictive for most applications. In the following theorem, we make a step further and relax the rank condition with a condition that depends on the stable rank of the residual matrix $A-A_k$. More formally, for an integer $k\geq 1$, we say that a matrix $A$ has a \emph{$k$-low stable rank tail} iff $k \geq \sr{A-A_k}$. 

Notice that the above definition is useful since it contains the set of matrices whose spectrum follows a power-law distribution and those with exponentially decaying spectrum. Therefore the following theorem combined with the remark below (partially) answers in the affirmative the question posed by~\cite{low_rank:STOC09}: Is there a relative error approximation algorithm with respect to the spectral norm when the spectrum of the input matrix decays in a power law?
\begin{theorem}\label{thm:lowrank:low_stable_tail}
Let $0<\eps <1/3$ and let $A$ be a real $n \times m$ matrix with a \emph{$k$-low stable rank tail}. Let $R$ be a $t\times n$ random sign matrix rescaled by $1/\sqrt{t}$ and set $\widetilde{A}=RA$. If $t = \Omega(k / \eps ^4)$, then with high probability
\[	\norm{A-P_{\widetilde{A},k}(A)} \leq (2+\eps) \norm{A-A_k}.\]
\end{theorem}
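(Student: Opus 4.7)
The plan is to combine the two tools of Theorem~\ref{thm:matrixmult} to control the head $A_k$ and the tail $E := A - A_k$ separately. By the $k$-low stable rank tail assumption $\sr{E} \leq k$, while trivially $\rank{A_k} \leq k$. With $t = \Omega(k/\eps^4)$, both parts of Theorem~\ref{thm:matrixmult} apply simultaneously to the same random sign matrix $R$ with high probability: (a) the pointwise multiplicative bound $|\norm{RA_k y}^2 - \norm{A_k y}^2| \leq \eps \norm{A_k y}^2$ for every $y$ (from part (\textit{i.a}) applied to the rank-$k$ matrix $A_k$), and (b) the operator-norm bound $\norm{(RE)^\top (RE) - E^\top E} \leq \eps \sigma^2$, where $\sigma := \sigma_{k+1}(A) = \norm{E}$ (from part (\textit{i.b}) applied to $E$).

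Next, let $W$ denote the row space of $\widetilde{A}$ and $P = \pinv{\widetilde{A}}\widetilde{A}$ its orthogonal projector. I decompose $A - P_{\widetilde{A},k}(A) = M_1 + M_2$, where $M_1 := A(I-P)$ and $M_2 := AP - (AP)_k$. The rows of $M_1$ live in $W^\perp$ while those of $M_2$ live in $W$, so $M_1 M_2^\top = M_2 M_1^\top = 0$; Weyl's inequality on $(M_1+M_2)(M_1+M_2)^\top = M_1 M_1^\top + M_2 M_2^\top$ then yields the ``Pythagorean'' spectral bound $\norm{A - P_{\widetilde{A},k}(A)}^2 \leq \norm{M_1}^2 + \norm{M_2}^2$. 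The second piece is immediate: $\norm{M_2} = \sigma_{k+1}(AP) \leq \sigma_{k+1}(A) = \sigma$ since $P$ is a contraction.

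The heart of the argument is bounding $\norm{M_1} = \norm{A(I-P)}$. Take a unit vector $y$ with $\widetilde{A} y = 0$ and, without loss of generality, assume $y$ lies in the row space of $A$ (a null-space component contributes nothing to $Ay$). Decomposing via the SVD $y = V_k c_1 + V_\perp c_2$, the condition $\widetilde{A} y = 0$ becomes $R A_k y = -R E y$. Applying (a) to the left-hand side and (b) to the right-hand side gives $(1-\eps)\norm{\Sigma_k c_1}^2 \leq \norm{\Sigma_\perp c_2}^2 + \eps \sigma^2$; combined with $\norm{Ay}^2 = \norm{\Sigma_k c_1}^2 + \norm{\Sigma_\perp c_2}^2$ and $\norm{\Sigma_\perp c_2} \leq \sigma$, a short calculation yields $\norm{Ay}^2 \leq (2+O(\eps))\sigma^2$. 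Combined with the Pythagorean bound above, $\norm{A - P_{\widetilde{A},k}(A)}^2 \leq (3+O(\eps))\sigma^2$, and this is at most $(2+\eps)^2 \sigma^2$ by routine algebra valid for $\eps \in (0,1/3)$.

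The main obstacle is that the stable-rank hypothesis on $E$ is too weak to fit into Lemma~\ref{lem:rayleigh_implies_lowrank}, so the clean $(1+\eps)$ approach of Theorem~\ref{thm:lowrank} (\textit{i.a}) is unavailable here. Instead one must exploit the coupling $R A_k y = -R E y$ forced on $y \perp W$, which binds the head's distortion to the tail's operator norm and is precisely what delivers the $\sqrt{2}$ factor in $\norm{A(I-P)}$. The orthogonality of the row spaces of $M_1$ and $M_2$ is equally essential: the naive triangle inequality would yield only $(\sqrt{2}+1)\sigma \approx 2.41\,\sigma$, whereas the Pythagorean combination gives $\sqrt{3}\,\sigma \leq (2+\eps)\sigma$ as required.
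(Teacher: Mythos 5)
Your proof is correct, and it reaches the conclusion by a genuinely different deterministic skeleton than the paper, while using the same two probabilistic inputs. The paper invokes Lemma~\ref{lem:low_rank:STOC09} of~\cite{low_rank:STOC09} as a black box, which already carries the factor $2\norm{A-A_k}$, and then controls the residual term $\norm{\pinv{(RU_k)}RH_k}$ by applying Theorem~\ref{thm:matrixmult}~(\textit{i.a}) to $U_k$ (to invert $(RU_k)^\top RU_k$ up to $(1-\eps)^{-2}$) and Theorem~\ref{thm:matrixmult}~(\textit{i.b}) to the pair $(U_k, H_k)$, exploiting $U_k^\top H_k=0$ and $\sr{H_k}\leq k$. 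You instead re-derive the structural bound from scratch: the splitting $A-(AP)_k = A(I-P) + (AP-(AP)_k)$ into pieces with mutually orthogonal row spaces, the resulting Pythagorean bound on the spectral norm, the contraction bound $\sigma_{k+1}(AP)\leq\sigma_{k+1}(A)$, and a direct analysis of unit vectors in $\ker(RA)$ via the coupling $RA_ky=-REy$, which is where (\textit{i.a}) on the rank-$k$ head and (\textit{i.b}) on the stable-rank-$k$ tail enter --- exactly the same probabilistic events as in the paper, both available at $t=\Omega(k/\eps^4)$. In effect you have inlined (and sharpened) the cited lemma: your route gives $\sqrt{3+O(\eps)}\,\norm{A-A_k}$, which is strictly better than the paper's $2+O(\eps)$ and in fact stays below $2\norm{A-A_k}$ for the whole range $\eps<1/3$; the price is a somewhat longer self-contained argument, and your observation that the plain triangle inequality would only give $\approx 2.41\sigma$ correctly identifies why the orthogonality of the two pieces is indispensable in your version.
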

\begin{remark}
The $(2+\eps)$ bound can be improved to a relative $(1+\eps)$ error bound if we return as the approximate solution a slightly higher rank matrix, i.e., by returning the matrix $P_{\widetilde{A}}(A)$, which has rank at most $t=\Omega(k/\eps^4)$ (see \cite[Theorem~$9.1$]{matrix:survey:HMT09}).
\end{remark}
\section{Proofs}\label{sec:proofs}
\subsection{Proof of Theorem~\ref{thm:matrixmult} (Matrix Multiplication)}
\paragraph{Random Projections - Part (\textit{i})}
\paragraph{Part (\textit{a}):} 
In this section we show the first, to the best of our knowledge, non-trivial
spectral bound for matrix multiplication. Although the proof is an immediate
corollary of the subspace Johnson-Lindenstrauss lemma
(Lemma~\ref{lem:jl_subspace}), this result is powerful enough to give, for example, tight
bounds for the $\ell_2$ regression problem. We prove the following more general theorem from which
Theorem~\ref{thm:matrixmult} (\textit{i.a}) follows by plugging in $t=\Omega(r/\eps^2)$.
\begin{theorem}\label{thm:matrixmult:restated}
Let $A\in{\RR^{n\times m}}$ and $B\in{\RR^{ n\times p}}$. Assume that the ranks
of $A$ and $B$ are at most $r$. Let $R$ be a $t\times n$ random sign matrix rescaled by $1/\sqrt{t}$. Denote by $\widetilde{A}= RA$ and $\widetilde{B}= RB$. The following inequality
holds
\[ \Prob{ \forall x\in\RR^m, y\in\RR^p, \quad  |x^\top (\widetilde{A}^\top \widetilde{B} - A^\top B)y| \leq \eps \norm{Ax} \norm{By} }\]
\[ \geq 1- c_2^{r} \exp (-c_1 \eps^2 t), \]
where $c_1>0,c_2>1$ are constants.
\end{theorem}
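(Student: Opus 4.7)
}

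The plan is to reduce the bilinear statement to a quadratic-form estimate on a single low-dimensional subspace, and then apply the subspace Johnson--Lindenstrauss lemma (Lemma~\ref{lem:jl_subspace}) to that subspace. Define
\[ \mathcal{W} := \im{A} + \im{B} \subseteq \RR^n, \]
the (Minkowski) sum of the column spaces of $A$ and $B$. Since $\dim\mathcal{W} \le \rank{A} + \rank{B} \le 2r$, Lemma~\ref{lem:jl_subspace} applied to $\mathcal{W}$ guarantees that with probability at least $1 - c_2^{2r}\exp(-c_1\eps^2 t)$, the event
\[ \mathcal{E} : \quad \bigl|\,\|Rw\|^2 - \|w\|^2\,\bigr| \le \eps \|w\|^2 \quad \forall\,w\in\mathcal{W} \]
holds. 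By renaming the constant $c_2$ (replacing $c_2$ with $c_2^2$), this gives the desired failure probability $c_2^{r}\exp(-c_1\eps^2 t)$.

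Next I would upgrade $\mathcal{E}$ from a quadratic bound to a bilinear bound by the standard polarization identity. For arbitrary $u,v\in\mathcal{W}$ one has
\[ u^\top(R^\top R - I)v = \tfrac{1}{4}\!\left[(u{+}v)^\top(R^\top R - I)(u{+}v) - (u{-}v)^\top(R^\top R - I)(u{-}v)\right]. \]
Because $u\pm v \in \mathcal{W}$, the event $\mathcal{E}$ controls both terms on the right, yielding
\[ |u^\top(R^\top R - I)v| \le \tfrac{\eps}{4}\bigl(\|u+v\|^2 + \|u-v\|^2\bigr) = \tfrac{\eps}{2}\bigl(\|u\|^2 + \|v\|^2\bigr). \]
Applied directly this is not yet of the desired product form, but a homogenization trick fixes it: the bound scales as $\lambda u$ and $\mu v$ on both sides, so I may rescale $u \mapsto u/\|u\|$ and $v \mapsto v/\|v\|$ to get $|u^\top(R^\top R - I)v| \le \eps$ when $\|u\|=\|v\|=1$, and then rescaling back produces
\[ |u^\top(R^\top R - I)v| \le \eps \|u\|\,\|v\| \quad \text{for all } u,v\in\mathcal{W}. \]

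Finally I specialize to $u = Ax$ and $v = By$, both of which lie in $\mathcal{W}$ by construction. Since
\[ x^\top(\widetilde{A}^\top \widetilde{B} - A^\top B)y = (Ax)^\top(R^\top R - I)(By), \]
the previous display gives $|x^\top(\widetilde{A}^\top \widetilde{B} - A^\top B)y| \le \eps\|Ax\|\|By\|$ for every $x\in\RR^m$, $y\in\RR^p$, conditional on $\mathcal{E}$, which proves the theorem. I expect no serious obstacle here: the only nontrivial ingredient is the subspace JL lemma, which is cited; the polarization step is routine but is the one place where one must be careful to avoid a suboptimal additive $\tfrac12(\|u\|^2+\|v\|^2)$ bound instead of the product form $\|u\|\|v\|$, and the homogenization above is the right way to handle that.
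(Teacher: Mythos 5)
Your proposal is correct and follows essentially the same route as the paper's proof: apply the subspace JL lemma to the (at most $2r$-dimensional) joint column span of $A$ and $B$, then pass from the quadratic estimate to the bilinear one via polarization (the paper uses the parallelogram law on unit vectors, which is the same computation) and homogeneity to obtain the product form $\eps\norm{Ax}\norm{By}$. No gaps; the constant bookkeeping ($c_2^{2r}$ versus $c_2^{r}$ after renaming) matches what the paper implicitly does as well.
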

\begin{proof}(of Theorem~\ref{thm:matrixmult:restated})
Let $A=U_A\Sigma_A V_A^\top$, $B=U_B \Sigma_B V^\top_{B}$ be the singular value
decomposition of $A$ and $B$ respectively. Notice that $U_{A}\in{\RR^{n\times
r_A}},U_{B}\in{\RR^{n\times r_B}}$, where $r_A$ and $r_B$ is the rank of $A$ and
$B$, respectively.

Let $x_1\in{\RR^m},x_2\in{\RR^{p}}$ two arbitrary unit vectors. Let $w_1= A x_1$
and $w_2=B x_2$. Recall that 
\[\norm{A^\top R^\top RB - A^\top B} =\]
\[ \sup_{x_1\in{\mathbb{S}^{m-1}},x_2\in{\mathbb{S}^{p-1}} } | x_1^\top(A^\top
R^\top RB - A^\top B)x_2|.\] 
We will bound the last term for any arbitrary vector. Denote with $\mathcal{V}$ the subspace\footnote{We denote by $\text{colspan}(A)$ the subspace generated by the columns of $A$, and $\text{rowspan}(A)$ the subspace generated by the rows of $A$.} $\text{colspan}(U_A)\cup \text{colspan}(U_B)$ of $\RR^n$. Notice that the size of $dim(\mathcal{V}) \leq r_A + r_B \leq 2r$. Applying Lemma~\ref{lem:jl_subspace} to $\mathcal{V}$, we get that with probability at least $1-c_2^{r}\exp(-c_1\eps^2 t)$ that
\begin{equation}\label{eq:matrixmult}
\forall\ v \in{\mathcal{V}}: \  \ |\norm{Rv}^2- \norm{v}^2 | \leq \eps \norm{v}^2.
\end{equation}
Therefore we get that for any unit vectors $v_1,v_2\in{\mathcal{V}}$:
\begin{eqnarray*}
 (Rv_1)^\top Rv_2	&   =  & \dfrac{\norm{ Rv_1 + Rv_2}^2-\norm{Rv_1 - Rv_2}^2}{4}\\
			& \leq & \dfrac{(1+\eps)\norm{v_1+v_2}^2-(1-\eps)\norm{v_1-v_2}^2}{4}\\
  		    	&   =  & \dfrac{\norm{v_1+v_2}^2-\norm{v_1-v_2}^2}{4}\\
			&   +  & \eps \dfrac{\norm{v_1+v_2}^2+\norm{v_1-v_2}^2}{4}\\
			&   =  & v_1^\top v_2 + \eps \frac{\norm{v_1}^2+\norm{v_2}^2}{2}\ =\ v_1^\top v_2 + \eps,
\end{eqnarray*}
where the first equality follows from the Parallelogram law, the first inequality follows from Equation~\eqref{eq:matrixmult}, and the last inequality
since $v_1,v_2$ are unit vectors. By similar considerations we get that $(Rv_1)^\top Rv_2  \geq v_1^\top v_2 - \eps$. By linearity of $R$, we get that 
\[\forall v_1,v_2 \in{\mathcal{V} }: \  \ |(Rv_1)^\top Rv_2 - v_1^\top v_2 | \leq \eps \norm{v_1}\norm{v_2} .  \]
Notice that $w_1,w_2\in{\mathcal{V} }$, hence $ |w_1^\top R^\top R w_2 - w_1^\top w_2| \leq \eps \norm{w_1}\norm{w_2} = \eps \norm{Ax_1}\norm{Bx_2}$.
\end{proof}
\paragraph{Part (\textit{b}):} 
We start with a technical lemma that bounds the spectral norm of any matrix $A$ when it's multiplied by a random sign matrix rescaled by $1/\sqrt{t}$.
\begin{lemma}\label{lem:Rudelson}
Let $A$ be an $n\times m$ real matrix, and let $R$ be a $t\times n$ random sign matrix rescaled by $1/\sqrt{t}$. If $t\geq \sr{A}$, then
\begin{equation}
 \Prob{ \norm{ RA } \geq 4 \norm{A} }\ \leq\ 2e^{-t/2}.
\end{equation}
\end{lemma}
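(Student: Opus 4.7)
The plan is the standard expectation-plus-concentration recipe for spectral norms of random matrices. For the expectation, I would write
\[
\norm{RA} \;=\; \sup_{u \in \SS^{t-1},\; x \in \SS^{m-1}} u^\top R A x,
\]
which exhibits $\norm{RA}$ as the supremum of a Rademacher bilinear process indexed by $\SS^{t-1} \times A(\BB^m)$. A Chevet-type bound in the Rademacher setting (or Gaussian comparison at a bounded constant loss) gives
\[
\EE\norm{RA} \;\leq\; C\bigl(\norm{A} + \frobnorm{A}/\sqrt{t}\bigr),
\]
since the Gaussian width of the ellipsoid $A(\BB^m)$ is at most $\frobnorm{A}$ (by Jensen applied to $\EE\norm{A^\top g}$) and that of $\SS^{t-1}$ is at most $\sqrt{t}$. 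The hypothesis $\frobnorm{A} \leq \sqrt{t}\,\norm{A}$, which is precisely $\sr{A} \leq t$, then yields $\EE\norm{RA} \leq 2C\,\norm{A}$.

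Next I would invoke Talagrand's inequality for convex Lipschitz functions of Rademacher variables. Viewing $R$ as $1/\sqrt{t}$ times a matrix of independent signs, the function sending those signs to $\norm{RA}$ is convex (a norm composed with a linear map) and is Lipschitz in the Frobenius metric with constant $\norm{A}/\sqrt{t}$, since $\bigl|\norm{RA}-\norm{R'A}\bigr| \leq \norm{(R-R')A} \leq \norm{A}\,\frobnorm{R-R'}$. Talagrand's inequality therefore gives $\Prob{\bigl|\norm{RA}-\EE\norm{RA}\bigr|>s}\leq 2\exp(-c s^{2} t/\norm{A}^{2})$ for an absolute constant $c>0$. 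Choosing $s$ a suitable constant multiple of $\norm{A}$ produces a $2\exp(-t/2)$-type tail, and the triangle inequality combined with the expectation bound yields $\norm{RA}\leq 4\,\norm{A}$ with probability at least $1-2e^{-t/2}$.

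The main obstacle is lining up the explicit constants $4$ and $1/2$. The Chevet and Talagrand inequalities are standard and produce an estimate of the correct qualitative form, but their generic constants may be slightly off, requiring careful optimization of the split between the expectation term and the concentration slack. A direct alternative that sidesteps this looseness would be to estimate $\EE\norm{RA}^{2p}$ via the non-commutative Khintchine inequality (the same backbone used in this paper for Theorem~\ref{thm:chernoff:matrix_valued:low_rank}) and conclude by Markov's inequality. This route plays to the technical machinery already developed elsewhere in the work and can be tuned to yield sharper numerical constants.
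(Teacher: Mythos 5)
Your proposal follows essentially the same route as the paper: a Chevet-type bound on $\EE\norm{RA}$ using $t \geq \sr{A}$, followed by Talagrand's concentration inequality for convex Lipschitz functions of the sign entries. The constant bookkeeping you flag is settled in the paper exactly along the lines you sketch, by comparing the sign matrix to a Gaussian one explicitly (Jensen with $\EE|g|=\sqrt{2/\pi}$), applying Gordon--Chev\`{e}t to get $\EE\norm{GA} \leq \frobnorm{A}+\sqrt{t} \leq 2\sqrt{t}$, and then concentrating around the median with deviation $\delta=\norm{A}$, which yields the stated constants $4$ and $2e^{-t/2}$.
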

\begin{proof}
Without loss of generality assume that $\norm{A} = 1$. Then $\frobnorm{A} = \sqrt{\sr{A}}$. Let $G$ be a $t\times n$ Gaussian matrix. Then by the Gordon-Chev\`{e}t inequality\footnote{For example, set $S=I_t, T=A$ in \cite[Proposition~$10.1$,~p.~$54$]{matrix:survey:HMT09}.}
\begin{eqnarray*}
 	\EE{\norm{GA} }	& \leq & \norm{I_t}\frobnorm{A} + \frobnorm{I_t}\norm{A} \\
			&   =  & \frobnorm{A} + \sqrt{t}\ \leq\ 2\sqrt{t}.
\end{eqnarray*}
The Gaussian distribution is symmetric, so $G_{ij}$ and $\sqrt{t} R_{ij}\cdot |G_{ij}|$, where $G_{ij}$ is a Gaussian random variable have the same distribution. By Jensen's inequality and the fact that $\EE{|G_{ij}|}=\sqrt{2/\pi}$, we get that $\sqrt{2/\pi} \EE{\norm{RA}} \leq \EE{\norm{GA}}/\sqrt{t}$.
Define the function $f:{\{\pm 1\}}^{t\times n} \to \RR$ by $f(S) = \norm{\frac1{\sqrt{t} } SA}$. The calculation above shows that $\text{median}(f)\leq \sqrt{2\pi }$. Since $f$ is convex and $(1/\sqrt{t})$-Lipschitz as a function of the entries of $S$, Talagrand's measure concentration inequality for convex functions yields
\begin{equation*}
 \Prob{ \norm{ RA } \geq \text{median}(f) +\delta} \leq 2 \exp (-\delta^2 t/2).
\end{equation*}
Setting $\delta =1 $ in the above inequality implies the lemma.
\end{proof}
Now using the above Lemma together with Theorem~\ref{thm:matrixmult} (\textit{i.a}) and a simple truncation argument we can prove part (\textit{i.b}).
\begin{proof}(of Theorem~\ref{thm:matrixmult} (\textit{i.b}))
 Without loss of generality assume that $\norm{A}=\norm{B}=1$. Set $r =\lfloor  \frac{1600 \max\{ \sr{A}, \sr{B} \}}{\eps^2}\rfloor$. Set $\widehat{A} = A - A_r$, $\widehat{B} = B- B_r$. Since $\frobnorm{A}^2 = \sum_{j=1}^{\rank{A}} \sigma_j(A)^2$,
\begin{eqnarray*}
 	\norm{\widehat{A}} \ \leq\ \dfrac{\frobnorm{A} }{\sqrt{r}} \leq \dfrac{\eps}{40}, \mbox{ and } \norm{\widehat{B}}	\ \leq \ \dfrac{\frobnorm{B} }{\sqrt{r}} \leq \dfrac{\eps}{40}.
\end{eqnarray*}
By triangle inequality, it follows that
\begin{eqnarray}
 	\norm{ \widetilde{A}^\top \widetilde{B} - A^\top B} \nonumber \\
						    & \leq & \norm{ A_r^\top R^\top  R B_r - A_r^\top B_r} \label{ineq:rud1}\\
						    &   +  & \norm{ \widehat{A}^\top R^\top R B_r}  \nonumber \\
						    &   +  & \norm{ A_r^\top R^\top  R \widehat{B} } + \norm{ \widehat{A}^\top R^\top R \widehat{B}} \label{ineq:rud2}\\
						    &   +  & \norm{ \widehat{A}^\top B_r} + \norm{ A_r^\top \widehat{B} } + \norm{ \widehat{A}^\top \widehat{B} }\label{ineq:rud3}.
\end{eqnarray}
Choose a constant in Theorem~\ref{thm:matrixmult} (\textit{i.a}) so that the failure probability of the right hand side of~\eqref{ineq:rud1} does not exceed $\exp (-c\eps^2 t)$, where $c=c_1/32$. The same argument shows that $\Prob{ \norm{R A_r} \geq 1 + \eps } \leq \exp( -c\eps^2 t)$ and $\Prob{ \norm{ R B_r} \geq 1 + \eps } \leq \exp( -c\eps^2 t)$. This combined with Lemma~\ref{lem:Rudelson} applied on $\widehat{A}$ and $\widehat{B}$ yields that the sum in~\eqref{ineq:rud2} is less than $2(1+\eps) \eps / 10 +\eps^2 /100$. Also, since $\norm{A_r},\norm{B_r} \leq 1$, the sum in~\eqref{ineq:rud3} is less that $2\eps/10 + \eps^2 /100$. Combining the bounds for~\eqref{ineq:rud1},~\eqref{ineq:rud2} and \eqref{ineq:rud3} concludes the claim.
\end{proof}
\paragraph{Row Sampling - Part (\textit{ii}):}
By homogeneity normalize $A$ and $B$ such that  $\norm{A}=\norm{B}=1$. Notice
that $A^\top B = \sum_{i=1}^{n} A_{(i)}^\top B_{(i)}$. Define $p_i =
\frac{\norm{A^\top_{(i)}}\norm{B_{(i)}} }{S}$, where
$S=\sum_{i=1}^{n}{\norm{A_{(i)}^\top}\norm{B_{(i)}}}$. Also define a distribution over
 matrices in $\RR^{(m+p)\times (m+p)}$ with $n$ elements by
\[\Prob{M=\frac1{p_i}\left[\begin{array}[c]{ll}
0 & B^\top_{(i)} A_{(i)} \\ 
A^\top_{(i)} B_{(i)} & 0
\end{array}
\right]} = p_i.\]
First notice that 
\begin{eqnarray*}
 \EE{M}  &  =  &  \sum_{i=1}^{n}{\frac1{p_i}\left[\begin{array}[c]{ll}
0 & B^\top_{(i)} A_{(i)} \\ 
A^\top_{(i)} B_{(i)} & 0
\end{array}
\right]}\cdot p_i \\
         &  =  &  \sum_{i=1}^{n}{\left[\begin{array}[c]{ll}
0 & B^\top_{(i)} A_{(i)} \\ 
A^\top_{(i)} B_{(i)} & 0
\end{array}
\right]} \\
	&  =  &  
\left[\begin{array}[c]{ll}
0 & B^\top A \\ 
A^\top B & 0
\end{array}
\right].
\end{eqnarray*}
This implies that $\norm{\EE{M}}= \norm{A^\top B} \leq 1$. Next notice that the spectral norm of the random matrix $M$ is upper bounded by
$\sqrt{\sr{A}\sr{B} }$ almost surely. Indeed,
\begin{eqnarray*}
 \norm{M}    		&   \leq  &  \sup_{i\in{[n]}}\norm{\dfrac{A^\top_{(i)} B_{(i)}}{p_i}}\\
			&   =  &  S\sup_{i\in{[n]}} \norm{\dfrac{A_{(i)}^\top}{\norm{A_{(i)}} } \dfrac{B_{(i)}}{\norm{B_{(i)}} }} = S \cdot 1\\
		 			&   =  &  \sum_{i=1}^{n}{\norm{A_{(i)}}\norm{B_{(i)}} } 
		 			\ \leq  \  \frobnorm{A}\frobnorm{B} \\
		 			&   =   & \sqrt{\sr{A}\sr{B}} 
					\ \leq  \ (\sr{A} + \sr{B} ) /2,
\end{eqnarray*}
by definition of $p_i$, properties of norms, Cauchy-Schwartz inequality, and arithmetic/geometric mean inequality. Notice that this quantity (since the spectral norms of both $A,B$ are one) is at most $\widetilde{r}$ by assumption. Also notice that every element on the support of the random variable $M$, has rank at most two. It is easy to see that, by setting $\gamma = \widetilde{r}$, all the conditions in Theorem~\ref{thm:chernoff:matrix_valued:low_rank} are satisfied, and hence we get $i_1,i_2, \dots ,
i_t$ indices from $[n]$, $t=\Omega(\widetilde{r} \log (\widetilde{r}/\eps^2) /\eps^2 ) $, such that with high probability
\begin{eqnarray*}
\|\frac1{t} 
\sum_{j=1}^{t}{\left[\begin{array}[c]{ll}
0 & \frac1{p_{i_j}} B^\top_{(i_j)} A_{(i_j)} \\ 
\frac1{p_{i_j}}A^\top_{(i_j)} B_{(i_j)} & 0
\end{array}
\right]}\\
 - 
\left[\begin{array}[c]{ll}
0 & B^\top A \\ 
A^\top B & 0
\end{array}
\right]\|_2 &\leq& \eps.
\end{eqnarray*}
The first sum can be rewritten as $\widetilde{A}^\top \widetilde{B}$ where $\widetilde{A}
=\frac1{\sqrt{t}}
\left[\begin{array}[l]{llll}
\frac1{\sqrt{p_{i_1}}}A_{(i_1)}^\top & \frac1{\sqrt{p_{i_2}}}A_{(i_2)}^\top & \dots
& \frac1{\sqrt{p_{i_t}}}A_{(i_t)}^\top
\end{array}
\right]^\top$ and $ \widetilde{B} = \frac1{\sqrt{t}} \left[\begin{array}[l]{llll}
\frac1{\sqrt{p_{i_1}}}B_{(i_1)}^\top & \frac1{\sqrt{p_{i_2}}}B_{(i_2)}^\top & \dots
& \frac1{\sqrt{p_{i_t}}}B_{(i_t)}^\top
\end{array}
\right]^\top$. This concludes the theorem.
\subsection{Proof of Theorem~\ref{thm:ell2_regression} ($\ell_2$-regression)}
\begin{proof}(of Theorem~\ref{thm:ell2_regression})
Similarly as the proof in~\cite{sarlos}. Let $A=U\Sigma V^\top$ be the SVD of $A$. Let  $b=Ax_{opt} + w$, where $w\in\RR^n$ and $w\bot$\text{colspan(A)}. Also let $A(\widetilde{x}_{opt} - x_{opt})=Uy$, where $y \in \RR^{\rank{A}}$. Our goal is to bound this quantity
\begin{eqnarray}
 \norm{b-A\widetilde{x}_{opt}}^2 &=& \norm{b-A(\widetilde{x}_{opt} - x_{opt}) -
Ax_{opt}}^2\nonumber \\
                            &=& \norm{w - Uy}^2 \nonumber \\
			    & = & \norm{w}^2 +\norm{Uy}^2,  \quad \text{since }w\bot
\text{colspan}(U)\nonumber \\
				& = & \norm{w}^2 + \norm{y}^2, \quad \text{since
}U^\top U = I. \label{eq:l2_basic}
\end{eqnarray}
It suffices to bound the norm of $y$, i.e., $\norm{y} \leq 3\eps \norm{w}$. Recall that given $A,b$ the vector $w$ is uniquely defined. On the other hand, vector $y$ depends on the random projection $R$. Next we show the connection between $y$ and $w$ through the ``normal equations''.
\begin{eqnarray}
	RA\widetilde{x}_{opt}  &=& Rb +w_2 \implies \nonumber \\
	RA\widetilde{x}_{opt}  &=& R(Ax_{opt} + w) +w_2 \implies \nonumber \\
	RA(\widetilde{x}_{opt} - x_{opt})  &=& Rw + w_2 \implies \nonumber \\ 
	U^\top R^\top R U y  &=& U^\top R^\top Rw  + U^\top  R^\top  w_2 \implies
\nonumber \\
	U^\top R^\top R U y  &=& U^\top R^\top Rw \label{eq:random_l2},
\end{eqnarray}
where $w_2\bot\text{colspan}(R)$, and used this fact to derive Ineq.~\eqref{eq:random_l2}. A crucial observation is that the  $\text{colspan}(U)$ is perpendicular to $w$. Set $A=B=U$ in Theorem~\ref{thm:matrixmult}, and set $\eps' = \sqrt{\eps}$, and $t=\Omega( r /\eps'^2)$. Notice that $\rank{A}+\rank{B} \leq 2r$, hence with constant probability we know that $1-\eps' \leq \sigma_i(RU) \leq 1 + \eps'$. It follows that $\norm{U^\top R^\top R U y } \geq (1-\eps')^2 \norm{y}$. A similar argument (set $A=U$ and $B=w$ in Theorem~\ref{thm:matrixmult}) guarantees that $ \norm{U^\top R^\top Rw } = \norm{U^\top R^\top Rw - U^\top w} \leq \eps' \norm{U} \norm{w}= \eps' \norm{w}$. Recall that $\norm{U}  = 1$, since $U^\top U = I_n$ with high probability. Therefore, taking Euclidean norms on both sides of Equation~\eqref{eq:random_l2} we get that 
\[ \norm{ y} \leq \dfrac{\eps'}{(1-\eps')^2} \norm{w} \leq 4\eps' \norm{w}. \] 
Summing up, it follows from Equation~\eqref{eq:l2_basic} that, with constant probability, $\norm{b-A\widetilde{x}_{opt}}^2 \leq (1+16\eps'^2) \norm{b-Ax_{opt}}^2= (1 + 16\eps) \norm{b-Ax_{opt}}^2.$ This proves Ineq.~\eqref{ineq:regression:approx}.

Ineq.~\eqref{ineq:regression:x_opt} follows directly from the bound on the norm of $y$ repeating the above proof for $\eps' \leftarrow \eps $. First recall that $x_{opt}$ is in the row span of $A$, since $x_{opt} = V\Sigma^{-1} U^\top b$ and the columns of $V$ span the row space of $A$. Similarly for $\widetilde{x}_{opt}$ since the row span of $R\cdot A$ is contained in the row-span of $A$. Indeed, $\eps \norm{w} \geq \norm{y} =\norm{Uy} = \norm{A(x_{opt} - \widetilde{x}_{opt}) } \geq \sigma_{min(A)} \norm{ x_{opt} - \widetilde{x}_{opt} }$.
\end{proof}
\subsection{Proof of Theorems~\ref{thm:lowrank},~\ref{thm:lowrank:low_stable_tail} (Spectral Low Rank Matrix Approximation)}\label{sec:app:spectral}
\begin{proof}(of Lemma~\ref{lem:rayleigh_implies_lowrank})
By the assumption and using Lemma~\ref{lem:rayleight_to_eig} we get that
\begin{equation}\label{eq:sigma}
 (1-\eps) \sigma_i(A^\top A) \leq \sigma_i(\widetilde{A}^\top \widetilde{A}) \leq
(1+\eps) \sigma_i(A^\top A)
\end{equation}
for all $i=1,\ldots ,\rank{A}$. Let $\widetilde{\Pi}_k$ be the projection matrix onto the first $k$ right singular vectors of $\widetilde{A}$, i.e., $\pinv{(\widetilde{A}_k)}\widetilde{A}_k$. It follows that for every $k=1,\dots ,\rank{A}$
\begin{eqnarray*}
\norm{A- P_{\widetilde{A},k}(A)} & \leq & \norm{A-A \widetilde{\Pi}_k}^2 \\
			     &   =   & \sup_{x\in\RR^m,\ \|x\|=1}{\norm{A (I-\widetilde{\Pi}_k)x}^2}\\
			     &   =   & \sup_{x\in \ker \widetilde{\Pi}_k,\ \|x\|=1}{\|A x\|_2^2} \\
			     &   =   & \sup_{x\in \ker \widetilde{\Pi}_k,\ \|x\|=1}{ x^\top A^\top A x }\\
			     & \leq\ & (1+\eps) \sup_{x\in \ker \widetilde{\Pi}_k,\ \norm{x}=1} x^\top \widetilde{A}^\top \widetilde{A} x\\
			     &   =   & (1+\eps) \sigma_{k+1}(\widetilde{A }^\top \widetilde{A }) \\
		  	     & \leq  & (1+\eps)^2 \sigma_{k+1}(A^\top A)\\
			     &   =   & (1+\eps)^2 \norm{A-A_k}^2,
\end{eqnarray*}
using that $x\bot \ker{\widetilde{\Pi}_k}$ implies $\widetilde{\Pi}_k x = x$, left side of the hypothesis, Courant-Fischer on $\widetilde{A}^\top \widetilde{A}$ (see Eqn.~\eqref{eqn:Courant_Fischer}), Eqn.~\eqref{eq:sigma}, and properties of singular values, respectively.
\end{proof}
\paragraph{Proof of Theorem~\ref{thm:lowrank} (\textit{i}):}
\paragraph{Part (\textit{a}):}
Now we are ready to prove our first corollary of our matrix multiplication result to the problem of computing an approximate low rank matrix approximation of a matrix with respect to the spectral norm (Theorem~\ref{thm:lowrank}). 
\begin{proof}
Set $\widetilde{A} = \frac1{\sqrt{t}}RA$ where $R$ is a $\Omega(r/\eps^2)\times n$
random sign matrix. Apply Theorem~\ref{thm:matrixmult} \textit{(i.a)} on $A$ we have with high probability that
\begin{equation}\label{ineq:rayleigh:low_rank}
\forall~x\in\RR^n,\ (1-\eps)  x^\top A^\top A x \leq x^\top \widetilde{A}^\top
\widetilde{A}x \leq (1+\eps)x^\top A^\top Ax.
\end{equation}
Combining Lemma~\ref{lem:rayleigh_implies_lowrank} with Ineq.~\eqref{ineq:rayleigh:low_rank} concludes the proof.
\end{proof}
\paragraph{Part (\textit{b}):}
The proof is based on the following lemma which reduces the problem of low rank matrix approximation to the problem of bounding the norm of a random matrix. We restate it here for reader's convenience and completeness~\cite[Lemma~8]{low_rank:STOC09}, (see also~\cite[Theorem~$9.1$]{matrix:survey:HMT09} or \cite{cssp:boutsidis}).
\begin{lemma}\label{lem:low_rank:STOC09}
Let $A=A_k + U_{r-k}\Sigma_{r-k}V_{r-k}^\top$, $H_k = U_{r-k}\Sigma_{r -k}$ and
$R$ be \emph{any} $t\times n$ matrix. If the matrix $(RU_k)$ has full column
rank, then the following inequality holds,
\begin{equation}
	\norm{A-P_{(RA),k}(A)} \leq 2 \norm{A-A_k}~+~\norm{\pinv{(RU_k)}RH_k}.
\end{equation}
\end{lemma}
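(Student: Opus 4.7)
The plan is to find a rank-$k$ surrogate $X$ whose row-space lies inside the row-space of $RA$, control $\|A-X\|$ sharply using orthogonality of the top/bottom singular subspaces, and then combine this with a Pythagorean-style decomposition of $A-P_{(RA),k}(A)$. Following the standard HMT-type trick, I will set $X := U_k \pinv{(RU_k)} RA$; because $RU_k$ has full column rank, $\pinv{(RU_k)}(RU_k) = I_k$, and expanding $RA = RU_k\Sigma_k V_k^\top + RH_k V_{r-k}^\top$ gives $X = A_k + U_k\pinv{(RU_k)} R H_k V_{r-k}^\top$. Subtracting from $A = A_k + H_k V_{r-k}^\top$ produces $A - X = (H_k - U_k N) V_{r-k}^\top$ with $N := \pinv{(RU_k)} R H_k$. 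The crucial input is the subspace orthogonality $U_k^\top H_k = 0$, which makes the cross-terms in $(H_k - U_k N)^\top (H_k - U_k N)$ vanish, yielding the identity
\[
(H_k - U_k N)^\top (H_k - U_k N) = H_k^\top H_k + N^\top N,
\]
and hence, since the spectral norm of a sum of PSD matrices is at most the sum of their spectral norms,
\[
\|A - X\|^2 \leq \|A - A_k\|^2 + \|\pinv{(RU_k)} R H_k\|^2.
\]

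Writing $Z := P_{(RA),k}(A)$ and $P := \pinv{(RA)}(RA)$ (the orthogonal projector onto the row-space of $RA$), I will split $A - Z = (A - AP) + (AP - Z)$. Both $AP$ and its best rank-$k$ approximation $Z$ have every row in the row-space of $RA$, so $ZP = Z$ and $AP - Z = (A-Z)P$; combined with $A - AP = A(I - P)$ and $P(I-P) = 0$, this gives $(A-AP)(AP-Z)^\top = 0$. The usual cross-term cancellation then yields $(A-Z)(A-Z)^\top = (A-AP)(A-AP)^\top + (AP-Z)(AP-Z)^\top$, so
\[
\|A - Z\|^2 \leq \|A - AP\|^2 + \|AP - Z\|^2.
\]
To bound the first summand, use $XP = X$ (hence $A - AP = (A-X)(I-P)$) to get $\|A - AP\| \leq \|A - X\|$; to bound the second, use that $Z$ is the best rank-$k$ approximation of $AP$ to get $\|AP - Z\| = \sigma_{k+1}(AP) \leq \sigma_{k+1}(A)\|P\| \leq \|A - A_k\|$.

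Assembling these with the first-step inequality yields $\|A - Z\|^2 \leq 2\|A - A_k\|^2 + \|\pinv{(RU_k)} R H_k\|^2$, and the final step is $\sqrt{a^2 + b^2} \leq a + b$ together with $\sqrt{2}\leq 2$, which delivers the stated bound $2\|A - A_k\| + \|\pinv{(RU_k)} R H_k\|$. The main conceptual obstacle is spotting the Pythagorean decomposition: one has to notice that \emph{both} $P_{RA}(A)$ and its rank-$k$ SVD truncation sit entirely in the row-space of $RA$, while $A - P_{RA}(A)$ sits in its orthogonal complement, so the row-wise orthogonality annihilates the cross term. Proceeding by a direct triangle inequality $\|A - Z\| \leq \|A - X\| + \|X - AP\| + \|AP - Z\|$ without this observation produces the strictly worse constant $3\|A - A_k\| + 2\|\pinv{(RU_k)} R H_k\|$ and loses the clean $2+1$ shape of the lemma.
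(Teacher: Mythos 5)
Your proof is correct, and in fact establishes the slightly stronger bound $\sqrt{2}\,\norm{A-A_k}+\norm{\pinv{(RU_k)}RH_k}$. The paper itself gives no proof of this lemma --- it is quoted from \cite[Lemma~8]{low_rank:STOC09} with \cite[Theorem~9.1]{matrix:survey:HMT09} cited as an alternative source --- and your argument (the surrogate $X=U_k\pinv{(RU_k)}RA$, the cancellation via $U_k^\top H_k=0$, and the Pythagorean split of $A-P_{(RA),k}(A)$ across the row space of $RA$ and its complement) is essentially the standard proof from those references.
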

Notice that the above lemma, reduces the problem of spectral low rank matrix approximation to a problem of approximation the spectral norm of the random
matrix $\pinv{(RU_k)}RH_k$. 

First notice that by setting $t=\Omega(k/\eps^2)$ we can guarantee that the matrix $(RU_k)$ will have full column rank with high probability. Actually, we can say something much stronger; applying Theorem~\ref{thm:matrixmult} (\textit{i.a}) with $A=U_k$ we can guarantee that all the singular values are within $1\pm \eps$ with high probability. Now by conditioning on the above event ( $(RU_k)$ has full column rank), it follows from Lemma~\ref{lem:low_rank:STOC09} that
\begin{eqnarray*}
	\norm{A-P_{(RA),k}(A)}  & \leq & 2\norm{A-A_k} + \norm{\pinv{(RU_k)}RH_k}\\
				& \leq & 2\norm{A-A_k} + \norm{\pinv{(RU_k)}}\norm{RH_k} \\
				& \leq & 2\norm{A-A_k} + \frac1{1-\eps}\norm{RH_k}\\
				& \leq & 2\norm{A-A_k} + \frac{3}{2}\norm{RU_{r-k}}\norm{\Sigma_{r-k}} 
\end{eqnarray*}
using the sub-multiplicative property of matrix norms, and that $\eps <1/3$. Now, it suffices to bound the norm of $W:=RU_{r-k}$. Recall that
$R=\frac1{\sqrt{t}} G$ where $G$ is a $t\times n$ random Gaussian matrix, It is well-known that the distribution of the random matrix $GU_{r-k}$ (by rotational invariance of the Gaussian distribution) has entries which are also i.i.d. Gaussian random variables.
\ignore{
\begin{claim}\label{claim:rot_invariance}
Let $G$ be a $t\times n$ random zero-mean sub-Gaussian matrix, then $W=GU_{r-k}$ is a $t\times (r - k)$ random zero-mean sub-Gaussian matrix.
\end{claim}
\begin{proof}
Let $P=U_{r-k}$. To see this argument, note that any linear (fixed) combination of sub-Gaussian random variables is sub-Gaussian (with different sub-Gaussian constant). Now by the linearity of expectation we can easily show that every entry of $GP$ has expected value zero. Moreover, the correlation between two entries $E[(GP)_{ij}(GP)_{lk}]=E[(\sum_{r=1}^{t}{G_{ir}P_{rj}})$ $(\sum_{r=i}^{t}{G_{lr}P_{rk}})]$ is zero if $i\neq l$, and is equal to the inner product of the $j^{th}$ and $k^{th}$ column of $P$ otherwise. This gives that the covariance matrix is\footnote{For matrices $A,B$, we denote $A\otimes B$ the Kronecker product between them.} $I_{t} \otimes U_{r-k}^\top U_{r-k}=I_t\otimes
I_{r-k}=I_{t (r-k)}$, which implies that the entries of $W$ are i.i.d..
\end{proof}
}
Now, we can use the following fact about random sub-Gaussian matrices to give a bound on the spectral norm of $W$. Indeed, we have the following
\begin{theorem}\cite[Proposition~2.3]{rand_matrix:VR:l2_norm_rectangular}\label{thm:subgaussian_norm}
Let $W$ be a $t\times {(r-k)}$ random matrix whose entries are independent mean zero Gaussian random variables. Assume that $r-k\geq t$, then
\begin{equation}
	\Prob{\norm{W} \geq \delta \sqrt{r-k} } \leq e^{-c_0\delta^2\sqrt{r-k}}.
\end{equation}
for any $\delta >\delta_0$, where $\delta_0$ is a positive constant.
\end{theorem}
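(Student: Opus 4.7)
The plan is to combine two well-established Gaussian matrix tools: the Gordon--Chev\`et inequality to control $\mathbb{E}\|W\|$, and the Gaussian concentration inequality of Borell / Tsirelson--Ibragimov--Sudakov to get a tail bound around this expectation. Both have already appeared implicitly in this paper (Gordon--Chev\`et was used in the proof of Lemma~\ref{lem:Rudelson}), so the proof is essentially an assembly argument.

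First, I would bound the expectation. Setting $N := r - k$ and applying the Gordon--Chev\`et inequality to the pair of identity matrices $I_t$ and $I_N$ (exactly as was done in the proof of Lemma~\ref{lem:Rudelson}), I obtain
\begin{equation*}
\mathbb{E}\|W\| \ \leq\ \sqrt{t} + \sqrt{N} \ \leq\ 2\sqrt{N},
\end{equation*}
where the last step uses the hypothesis $N \geq t$.

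Second, I would invoke Gaussian concentration. The map $W \mapsto \|W\|$ from $\mathbb{R}^{tN}$ (with the Frobenius/Euclidean metric on the entries) to $\mathbb{R}$ is $1$-Lipschitz, since $\bigl|\,\|W_1\| - \|W_2\|\,\bigr| \leq \|W_1 - W_2\| \leq \|W_1 - W_2\|_{\text{F}}$. Hence the Gaussian isoperimetric (concentration) inequality yields, for every $s > 0$,
\begin{equation*}
\Prob{ \|W\| \,>\, \mathbb{E}\|W\| + s } \ \leq\ \exp(-s^2/2).
\end{equation*}

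Finally, I would combine the two estimates. Choose $\delta_0 := 2$ and let $\delta > \delta_0$. Setting $s = (\delta - 2)\sqrt{N}$ and using the expectation bound gives
\begin{equation*}
\Prob{\|W\| \geq \delta\sqrt{N}} \ \leq\ \exp\bigl( -(\delta-2)^2 N / 2 \bigr),
\end{equation*}
which is in fact \emph{stronger} than the claimed bound $\exp(-c_0\delta^2\sqrt{N})$ whenever $N \geq 1$; rescaling constants yields the stated form. There is no real obstacle here: the only technical input is Slepian's comparison lemma behind Gordon--Chev\`et and Borell's inequality, both standard. The only item that requires mild care is the restriction $\delta > \delta_0$, which is forced by the fact that we need $\delta\sqrt{N}$ to exceed $\mathbb{E}\|W\|$ before the concentration estimate becomes meaningful.
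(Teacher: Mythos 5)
Your argument is sound, but note that the paper itself does not prove this statement at all: it is imported verbatim as Proposition~2.3 of the cited Rudelson--Vershynin paper, where it is established for general subgaussian (not just Gaussian) entries by different means. Your route --- Gordon--Chev\`et for $\EE\norm{W}\le\sqrt{t}+\sqrt{r-k}\le 2\sqrt{r-k}$, then Borell's Gaussian concentration for the $1$-Lipschitz map $W\mapsto\norm{W}$ (Lipschitz with respect to $\frobnorm{\cdot}$) --- is a perfectly valid self-contained proof in the Gaussian case, which is exactly the case the paper needs (there $W=GU_{r-k}$ has i.i.d.\ standard Gaussian entries by rotational invariance), and it even yields the stronger tail $\exp(-c(\delta-2)^2(r-k))$ with exponent proportional to $r-k$ rather than $\sqrt{r-k}$.

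Two small repairs. First, the hypothesis as stated only says ``mean zero Gaussian''; your use of $\EE\norm{G}\le\sqrt{t}+\sqrt{N}$ and of the concentration bound $\exp(-s^2/2)$ both require unit variance, so you should state explicitly that the entries are standard Gaussians (as in the paper's application; without a variance bound the statement is false). Second, with $\delta_0:=2$ your final comparison fails for $\delta$ slightly above $2$, since $(\delta-2)^2(r-k)/2$ can be arbitrarily small while $c_0\delta^2\sqrt{r-k}\ge 4c_0$. The fix is immediate: since the theorem lets you choose the constant $\delta_0$, take e.g.\ $\delta_0:=4$, so that $\delta-2\ge\delta/2$ and hence $(\delta-2)^2(r-k)/2\ge \delta^2\sqrt{r-k}/8$, giving the stated form with $c_0=1/8$.
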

Apply union bound on the above theorem with $\delta$ be a sufficient large constant and on the conditions of Lemma~\ref{lem:low_rank:STOC09}, we get that with high probability, $\norm{W} \leq C_3\sqrt{r-k}$ \emph{and} $\sigma_{\min}(\pinv{(RU_k)}) \leq 1/(1-\eps)$. Hence, Lemma~\ref{thm:subgaussian_norm} combined with the above discussion implies that
\clearpage
\begin{eqnarray*}
	\norm{A-P_{(RA),k}(A)}  & \leq & 2\norm{A-A_k}\\
				&   +  & 3/2\norm{RU_{r-k}}\norm{A-A_k} \\
				&   =  & 2\norm{A-A_k} \\
				& + &  \frac{3}{2\sqrt{t}}\norm{GU_{r-k}}\norm{A-A_k} \\
				& \leq & \left(2 + c_4\eps\sqrt{\frac{r-k}{k}}\right) \norm{A-A_k},
\end{eqnarray*}
where $c_4>0$ is an absolute constant. Rescaling $\eps$ by $c_4$ concludes
Theorem~\ref{thm:lowrank} (\textit{i.b}).
\paragraph{Proof of Theorem~\ref{thm:lowrank} (\textit{ii})}
Here we prove that we can achieve the same relative error bound as with random projections by just sampling rows of $A$ through a judiciously selected distribution. However, there is a price to pay and that's an extra logarithmic factor on the number of samples, as is stated in
Theorem~\ref{thm:lowrank}, part (\textit{ii}). 
\begin{proof}(of Theorem~\ref{thm:lowrank} (\textit{ii}))
The proof follows closely the proof of~\cite{graph:sparsifiers:eff_resistance}. Similar with the proof of part (\textit{a}).  Let $A=U\Sigma V^\top $ be the singular value decomposition of $A$. Define the projector matrix $\Pi = U U^\top$ of size $n\times n$. Clearly, the rank of $\Pi$ is equal to the rank of $A$ and $\Pi$ has the same image with $A$ since every element in the image of $A$ and $\Pi$ is a linear combination of columns of $U$. Recall that for any projection matrix, the following holds $\Pi^2=\Pi$ and hence $\sr{\Pi}=\rank{A}=r$. Moreover, $\sum_{i=1}^{n}\norm{U_{(i)}}^2=\trace{UU^\top}=\trace{\Pi}=\trace{\Pi^2}=r$. Let $p_i = \Pi(i,i)/r=\norm{U_{(i)}}^2/r$ be a probability distribution on $[n]$, where $U_i$ is
the $i$-th row of $U$.

Define a $t\times n$ random matrix $S$ as follows: Pick $t$ samples from $p_i$; if the $i$-th sample is equal to $j(\in{[n]})$ then set $S_{ij} = 1/\sqrt{p_j}$. Notice that $S$ has exactly one non-zero entry in each row, hence it has $t$ non-zero entries. Define $\widetilde{A} = S A$.

It is easy to verify that $\EE_S{\Pi S^\top S \Pi} = \Pi^2 =\Pi$. Apply Theorem~\ref{thm:chernoff:matrix_valued:low_rank} (alternatively we can use \cite[Theorem~3.1]{lowrank:rankone:VR}, since the matrix samples are rank one) on the matrix $\Pi$, notice that $\frobnorm{\Pi}^2=r$ and $\norm{\Pi}=1$, $\norm{\EE_S{\Pi S^\top S \Pi }}\leq 1$, hence the stable rank of $\Pi$ is $r$. Therefore, if $t=\Omega (r \log (r/\eps^2) /\eps^2)$ then with high probability 
\begin{equation}\label{ineq:projector_approx_implies_rayleigh}
\norm{\Pi S^\top S\Pi -\Pi \Pi} \leq \eps.
\end{equation}
It suffices to show that Ineq.~\eqref{ineq:projector_approx_implies_rayleigh} is equivalent with the condition of Lemma~\ref{lem:rayleigh_implies_lowrank}. Indeed,
\begin{eqnarray*}
  \sup_{x\in\RR^n,~x\neq 0} \left|\frac{x^\top (\Pi S^\top S\Pi - \Pi \Pi)
x}{x^\top x}\right| \leq \eps & \Leftrightarrow & \\
\sup_{x\not{\in{\ker{\Pi}}},~x\neq 0} \frac{\left|x^\top (\Pi S^\top S\Pi - \Pi \Pi) x\right|}{x^\top x}
\leq \eps & \Leftrightarrow & \\
\sup_{y\in{\text{Im}(A)},~y\neq 0} \frac{\left|y^\top (\Pi S^\top S\Pi - \Pi
\Pi) y\right|}{y^\top y} \leq \eps & \Leftrightarrow & \\
\sup_{x\in\RR^m,~A x\neq 0}\frac{\left|x^\top A^\top (\Pi S^\top S\Pi - \Pi \Pi)A x\right|}{x^\top A^\top A
x} \leq \eps 	& \Leftrightarrow & \\
\sup_{x\in\RR^m,~Ax\neq 0}
\frac{\left|x^\top (A^\top S^\top S A - A^\top A)x\right|}{x^\top A^\top A x}
\leq \eps & \Leftrightarrow & \\
\sup_{x\in\RR^m,~Ax\neq 0}\frac{\left|x^\top (\widetilde{A}^\top \widetilde{A} - A^\top A)x\right|}{x^\top A^\top
A x} \leq \eps,
\end{eqnarray*}
since $x\not\in{\ker{\Pi}}$ implies $x\in{\im{A}}$, $\im{A}\equiv \im{\Pi}$, and $\Pi A = A$. By re-arranging terms we get Equation~\eqref{ineq:rayleigh:low_rank} and so the claim follows.
\end{proof}
\paragraph{Proof of Theorem~\ref{thm:lowrank:low_stable_tail}:}
Similarly with the proof of Theorem~\ref{thm:lowrank} (\textit{i.b}). By following the proof of part (\textit{i.b}), conditioning on the event that $(RU_k)$ has full column rank in Lemma~\ref{lem:low_rank:STOC09}, we get with high probability that
\begin{eqnarray*}
 \norm{A-P_{\widetilde{A},k}(A)}  & \leq & 2\norm{A-A_k} + \frac{\norm{U_k^\top R^\top RH_k}}{(1-\eps)^2}
\end{eqnarray*}
using the fact that if $(RU_k)$ has full column rank then $\pinv{(RU_k)} = ((RU_k)^\top R U_k)^{-1} U_k^\top R^\top $ and $\norm{((RU_k)^\top RU_k)^{-1}} \leq 1/(1-\eps)^2$. Now observe that $U_k^\top H_k=0$. Since $\sr{H_k} \leq k$, using Theorem~\ref{thm:matrixmult} (\textit{i.b}) with $t=\Omega(k/\eps^4)$, we get that $\norm{ U_k^\top R^\top R H_k } = \norm{ U_k^\top R^\top RH_k - U_k^\top H_k} \leq \eps \norm{U_k} \norm{H_k} = \eps \norm{A-A_k}$ with high probability. Rescaling $\eps$ concludes the proof.
\section{Acknowledgments}
Many thanks go to Petros Drineas for many helpful discussions and pointing out the connection of Theorem~\ref{thm:matrixmult} with the $\ell_2$-regression problem. The second author would like to thank Mark Rudelson for his valueable comments on an earlier draft and also for sharing with us the proof of~Theorem~\ref{thm:matrixmult} (\textit{i.b}).
\clearpage
{\tiny
\bibliographystyle{alpha}
\newcommand{\etalchar}[1]{$^{#1}$}

}

\section*{Appendix}
The next lemma states that if a symmetric positive semi-definite matrix
$\widetilde{A}$ approximates the Rayleigh quotient of a symmetric positive
semi-definite matrix $A$, then the eigenvalues of $\widetilde{A}$ also approximate
the eigenvalues of $A$.
\begin{lemma}\label{lem:rayleight_to_eig}
Let $0<\eps<1$. Assume $A$, $\widetilde{A}$ are $n\times n$ symmetric positive
semi-definite matrices, such 
that the following inequality holds
\[ (1-\eps)x^\top A x \leq x^\top \widetilde{A} x \leq (1+\eps)x^\top A x, \qquad
\forall\ x\in{\RR^n}.\]
Then, for $i=1,\dots, n$ the eigenvalues of $A$ and $\widetilde{A}$ are the same
up-to an error factor $\eps$, i.e.,
\[(1-\eps)\lambda_i(A) \leq \lambda_i(\widetilde{A}) \leq (1+\eps) \lambda_i(A).\]
\end{lemma}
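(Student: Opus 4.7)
The plan is to reduce the lemma to a direct application of the Courant--Fischer min--max characterization of eigenvalues. Recall that for any symmetric $M \in \RR^{n\times n}$ and any $1 \leq i \leq n$,
\[ \lambda_i(M) \;=\; \max_{\substack{S \subseteq \RR^n \\ \dim S = i}} \; \min_{\substack{x \in S \\ x \neq 0}} \dfrac{x^\top M x}{x^\top x}, \]
so the eigenvalues are completely determined by the Rayleigh quotient $R_M(x) := (x^\top M x)/(x^\top x)$ on nonzero vectors.

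First, I would divide the hypothesis by $x^\top x > 0$ for each $x \neq 0$, obtaining the pointwise sandwich
\[ (1-\eps)\, R_A(x) \;\leq\; R_{\widetilde{A}}(x) \;\leq\; (1+\eps)\, R_A(x). \]
Since both $1-\eps$ and $1+\eps$ are strictly positive, these inequalities are preserved under infima over $x$ in an arbitrary $i$-dimensional subspace $S$, and then under suprema of those infima over all such $S$. Consequently, the outer $\max\min$ on the Courant--Fischer side for $\widetilde{A}$ is sandwiched between $(1-\eps)$ and $(1+\eps)$ times the corresponding $\max\min$ for $A$, which is exactly $(1-\eps)\lambda_i(A) \leq \lambda_i(\widetilde{A}) \leq (1+\eps)\lambda_i(A)$.

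There is essentially no obstacle: the only thing one has to verify is that $\min$ and $\max$ commute with multiplication by a positive constant, which is immediate. The positive semi-definiteness of $A$ and $\widetilde{A}$ is used only to guarantee that the eigenvalues are non-negative, so that writing the perturbation in the multiplicative form $(1\pm\eps)\lambda_i(A)$ is unambiguous; the core argument requires only symmetry so that Courant--Fischer applies.
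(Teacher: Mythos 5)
Your proposal is correct and follows essentially the same route as the paper: both are direct applications of the Courant--Fischer characterization, with the sandwich on Rayleigh quotients transferred through the min/max over subspaces. Your pointwise version $(1-\eps)R_A(x)\leq R_{\widetilde{A}}(x)\leq(1+\eps)R_A(x)$ is in fact slightly cleaner than the paper's argument, which factors the quotient as $\frac{x^\top \widetilde{A}x}{x^\top A x}\cdot\frac{x^\top Ax}{x^\top x}$ at the optimal subspaces and therefore must first note that $A$ and $\widetilde{A}$ share a null space, a step your formulation avoids entirely.
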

\begin{proof}
The proof is an immediate consequence of the Courant-Fischer's characterization
of the eigenvalues. First notice that by hypothesis, $A$ and $\widetilde{A}$ have
the same null space. Hence we can assume without loss of generality, that
$\lambda_i(A), \lambda_i(\widetilde{A}) > 0$ for all $i=1,\dots, n$.
Let $\lambda_i(A)$ and $\lambda_i(\widetilde{A})$ be the eigenvalues (in non-decreasing order) of
$A$ and $\widetilde{A}$, respectively. The Courant-Fischer min-max
theorem~\cite[p.~394]{book:GVL} expresses the eigenvalues as 
\begin{equation}\label{eqn:Courant_Fischer}
\lambda_i(A) = \min_{S^i}\max_{x\in{S^i} } \frac{x^\top A x}{x^\top x},
\end{equation}
where the minimum is over all $i$-dimensional subspaces $S^{i}$. Let the
subspaces $S^{i}_0$ and $S^i_1$ where the minimum is achieved for the
eigenvalues of $A$ and $\widetilde{A}$, respectively.
Then, it follows that
\[\lambda_i (\widetilde{A}) = \min_{S^i}\max_{x\in{S^i} } \frac{x^\top \widetilde{A} x}{x^\top
x}\leq \max_{x\in{S^i_0} } \frac{x^\top \widetilde{A} x}{x^\top A x} \frac{x^\top A
x}{x^\top x} \leq (1+\eps)\lambda_i (A). \]
and similarly,
\[\lambda_i (A) = \min_{S^i}\max_{x\in{S^i} } \frac{x^\top A x}{x^\top x}\leq
\max_{x\in{S^i_1} } \frac{x^\top A x}{x^\top \widetilde{A} x} \frac{x^\top \widetilde{A}
x}{x^\top x} \leq \frac{\lambda_i (\widetilde{A})}{1-\eps}. \]
Therefore, it follows that for $i=1,\dots , n$,
\[(1- \eps) \lambda_i(A) \leq \lambda_i(\widetilde{A}) \leq (1+\eps) \lambda_i(A).\]
\end{proof}
\subsection*{Proof of Theorem~\ref{thm:chernoff:matrix_valued:low_rank}}\label{sec:chernoff:matrix_valued:low_rank}
For notational convenience, let $Z=\norm{ \frac1{t} \sum_{i=1}^{t} M_i -\EE M}$ and define $E_p:= \EE_{M_1,M_2,\dots , M_t}{Z^p}$. Moreover, let
$X_1,X_2,\dots , X_n$ be copies of a (matrix-valued) random variables $X$, we
will denote $\EE_{X_1,X_2,\dots , X_n}$ by $\EE_{X_{[n]}}$. Our goal is to give
sharp bounds on the moments of the non-negative random variable $Z$ and then
using the moment method to give concentration result for $Z$.

First we give a technical lemma of independent interest that bounds the $p$-th
moments of $Z$ as a function of $p$, $r$ (the rank of the samples), and the
$p/2$-th moment of the random variable $\norm{\sum_{j=1}^{t} M_j^2}$. More
formally, we have the following
\begin{lemma}\label{lem:E_p_vs_sum_of_squares}
Let $M_1,\dots , M_t $ be i.i.d. copies of $M$, where $M$ is a symmetric
matrix-valued random variable that has rank at most $r$ almost surely. Then for
every $p\geq 2$
\begin{equation}
 E_p \ \leq \ r t^{1-p} (2B_p)^p \EE_{M_{[t]}}{\norm{\sum_{j=1}^{t}{  M_j^2 }
}^{p/2}},
\end{equation}
where $B_p$ is a constant that depends on $p$.
\end{lemma}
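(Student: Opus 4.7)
The plan is to follow the three-step recipe that the paper alludes to: (1) symmetrize to replace the centered sum $\frac{1}{t}\sum_i (M_i - \mathbb{E} M)$ with a Rademacher sum; (2) apply the non-commutative Khintchine inequality in the Schatten $p$-norm; and (3) exploit the low rank hypothesis to pass from the Schatten $p/2$-norm back to the spectral norm, picking up exactly a factor of $rt$.

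\emph{Step 1 (symmetrization).} Introduce an independent copy $M_1',\ldots,M_t'$ of the sample and independent Rademacher signs $\eps_1,\ldots,\eps_t$. Standard symmetrization for norms (see~\cite{book:LedouxTalagrand}) yields
\begin{equation*}
 E_p \;=\; \tfrac{1}{t^p}\,\mathbb{E}\Bigl\|\sum_{i=1}^{t}(M_i - \mathbb{E} M_i)\Bigr\|^{p} \;\leq\; \tfrac{2^p}{t^p}\,\mathbb{E}_{M_{[t]}}\mathbb{E}_{\eps}\Bigl\|\sum_{i=1}^{t}\eps_i M_i\Bigr\|^{p}.
\end{equation*}

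\emph{Step 2 (Khintchine).} Since the spectral norm is dominated by any Schatten $p$-norm, $\|X\|=\|X\|_{S_\infty}\leq \|X\|_{S_p}$. The Lust-Piquard--Pisier--Buchholz non-commutative Khintchine inequality~\cite{khintchine:LP86,khintchine:Pisier,khintchine:Buchholz}, specialized to symmetric $M_i$, gives
\begin{equation*}
\mathbb{E}_\eps\Bigl\|\sum_{i=1}^{t}\eps_i M_i\Bigr\|_{S_p}^{p} \;\leq\; B_p^{\,p}\,\Bigl\|\sum_{i=1}^{t} M_i^{2}\Bigr\|_{S_{p/2}}^{p/2}.
\end{equation*}

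\emph{Step 3 (low-rank reduction).} Here the rank hypothesis enters. Because $M_i$ is symmetric, $\mathrm{rank}(M_i^{2}) = \mathrm{rank}(M_i) \leq r$ almost surely, so $\mathrm{rank}\bigl(\sum_{i=1}^{t} M_i^{2}\bigr) \leq rt$. For any matrix $X$ of rank $R$, $\|X\|_{S_{p/2}}^{p/2} = \sum_{i=1}^{R}\sigma_i(X)^{p/2} \leq R\,\|X\|^{p/2}$. Therefore
\begin{equation*}
\Bigl\|\sum_{i=1}^{t} M_i^{2}\Bigr\|_{S_{p/2}}^{p/2} \;\leq\; rt\,\Bigl\|\sum_{i=1}^{t} M_i^{2}\Bigr\|^{p/2}\quad\text{a.s.}
\end{equation*}

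\emph{Step 4 (assemble).} Taking expectation over $M_{[t]}$ in Step~2, combining with Steps~1 and~3, and collecting the factors $2^p$, $B_p^p$, and $rt$ produces
\begin{equation*}
 E_p \;\leq\; \tfrac{2^p}{t^p}\cdot B_p^{\,p}\cdot rt\cdot \mathbb{E}_{M_{[t]}}\Bigl\|\sum_{i=1}^{t} M_i^{2}\Bigr\|^{p/2} \;=\; r\,t^{1-p}\,(2B_p)^{p}\,\mathbb{E}_{M_{[t]}}\Bigl\|\sum_{i=1}^{t} M_i^{2}\Bigr\|^{p/2},
\end{equation*}
which is the claim.

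\emph{Main obstacle.} All of the work is bundled into the non-commutative Khintchine inequality; this is exactly the technical replacement (alluded to in the paper's remark before Theorem~\ref{thm:chernoff:matrix_valued:low_rank}) for Rudelson's selection lemma, which would only handle the rank-one case. The key observation that unlocks the ``dimension-free'' bound is the simple but crucial passage in Step~3: one pays only $r\cdot t$ when converting $\|\cdot\|_{S_{p/2}}^{p/2}$ to $\|\cdot\|^{p/2}$, because the rank of the sum $\sum M_i^{2}$ is controlled by the individual ranks, without any reference to the ambient dimension $n$.
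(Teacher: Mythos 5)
Your proposal is correct and follows essentially the same route as the paper's own proof: symmetrization to a Rademacher sum, the non-commutative Khintchine inequality in the Schatten $p$-norm, and the observation that $\sum_j M_j^2$ has rank at most $rt$ so that passing from $\|\cdot\|_{\mathrm{C}_{p/2}}^{p/2}$ to the spectral norm costs only a factor $rt$. The constants and the final bound match the paper's derivation, so there is nothing to add.
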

We need a non-commutative version of Khintchine inequality due to F.
Lust-Piquard~\cite{khintchine:LP86}, see also~\cite{khintchine:Pisier} and~\cite[Theorem~5]{khintchine:Buchholz}. We start with some
preliminaries; let $A\in{\RR^{n\times n}}$ and denote by $C_p^n$ the $p$-th
Schatten norm space$-$the Banach space of linear operators (or matrices in our
setting) in $\RR^n-$ equipped with the norm 
\begin{equation}\label{def:Schatten_norm}
	\left\|A\right\|_{\mathrm{C}_p^n}:= \left(\sum_{i=1}^{n}
\sigma_i(A)^p\right)^{1/p},
\end{equation}
where $\sigma_i(A)$ are the singular values of $A$, see~\cite[Chapter~IV,~p.92]{book:matrix:Bhatia} for a discussion on Schatten norms. Notice that $\norm{A} = \sigma_1(A)$, hence we have the following inequality
\begin{equation}\label{ineq:Schatter:rank}
 \norm{A} \leq \left\|A\right\|_{\mathrm{C}_p^n} \leq
\left(\rank{A}\right)^{1/p}\norm{A},
\end{equation}
for any $p \geq 1$. Notice that when $p=\log_2 (\rank{A})$, then $\rank{A}^{1/\log_2 (\rank{A})}=2$.
Therefore, in this case, the Schatten norm is essentially the spectral norm. We
are now ready to state the matrix-valued Khintchine inequality. See
e.g.~\cite{rudelson:isotropic} or~\cite[Lemma~8]{low_rank:STOC09}.
\begin{theorem}\label{thm:non-comm:Khintchine}
Assume $2 \leq p < \infty$. Then there exists a constant $B_p$ such that for any
sequence of $t$ symmetric matrices $M_1,\dots , M_t$, with $M_i\in{C_{p}^{n}}$,
such that the following inequalities hold
\begin{equation}\label{ineq:non_com:Khintchine}
\left(\EE_{\eps_{[t]}}{\left\|\sum_{i=1}^{t}{\eps_i M_i}
\right\|_{\mathrm{C}_p^n}^p}\right)^{1/p} \leq B_p
\left\|\left(\sum_{i=1}^{t}{M_i^2}\right)^{1/2} \right\|_{\mathrm{C}_p^n}
\end{equation}
where for every $i\in{[t]}$, $\eps_i$ is a Bernoulli random variable. Moreover,
$B_p$ is at most\footnote{See Eqn.~(17) in~\cite{low_rank:STOC09} or~\cite{khintchine:Buchholz}.}
$2^{-1/4}\sqrt{\pi/e} \sqrt{p}$.
\end{theorem}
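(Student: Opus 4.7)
This is the non-commutative Khintchine inequality of Lust-Piquard, Pisier, and Buchholz; since the paper invokes it as a cited tool, the plan is not to re-derive it from scratch but to sketch the standard moment-method approach that yields the stated $O(\sqrt{p})$ constant. By monotonicity of Schatten norms (or Riesz–Thorin interpolation between the cases $p=2$ and $p=2k$), it suffices to establish the bound for even integer exponents $p=2k$ and extend to real $p\geq 2$ by interpolation at the end.

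So fix $p=2k$ and set $S=\sum_{i=1}^t\eps_i M_i$. Using the identity $\|S\|_{\mathrm{C}_{2k}^n}^{2k}=\mathrm{tr}(S^{2k})$, expansion of the $2k$-th power gives
\[
\EE_{\eps_{[t]}}\mathrm{tr}(S^{2k})=\sum_{i_1,\dots,i_{2k}}\EE[\eps_{i_1}\cdots\eps_{i_{2k}}]\;\mathrm{tr}(M_{i_1}M_{i_2}\cdots M_{i_{2k}}).
\]
Independence of the Rademachers $\eps_i$ kills every term in which some index appears an odd number of times, so only index sequences corresponding to pair partitions of $\{1,\dots,2k\}$ survive. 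The goal is then to bound each surviving trace by a quantity that reassembles into $\mathrm{tr}\bigl((\sum_i M_i^2)^k\bigr)=\|(\sum_i M_i^2)^{1/2}\|_{\mathrm{C}_{2k}^n}^{2k}$.

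The core combinatorial/analytic step splits by the type of partition. For non-crossing pair partitions the reduction is straightforward: cyclicity of trace and Hölder in Schatten norms let one group adjacent paired factors $M_iM_i$ into $M_i^2$ and collapse the expression into the desired square-function form. For crossing partitions, one needs the subtler trace rearrangement inequality of Haagerup/Buchholz type, which controls a crossing trace monomial by non-crossing ones at the cost of a small factor. Summing over all pair partitions, the number of contributing patterns is Catalan-like, and careful counting à la Buchholz gives the sharp constant $B_p=2^{-1/4}\sqrt{\pi/e}\sqrt{p}$; a naive bound on the number of partitions only yields $B_p=O(p)$.

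The main obstacle, and the reason the paper cites this result rather than re-proving it, is precisely the sharp combinatorial accounting needed for the $\sqrt{p}$ (as opposed to $p$) dependence: this is exactly where the Buchholz trace inequality enters in a nontrivial way. Once the even-integer case is secured with the optimal constant, real $p\geq 2$ follows from complex interpolation between Schatten spaces, and the inequality is stated in its final form.
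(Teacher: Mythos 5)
The paper does not prove this theorem at all: it is imported verbatim as a known result of Lust-Piquard, Pisier, and Buchholz (the footnote points to Eqn.~(17) of \cite{low_rank:STOC09} and to \cite{khintchine:Buchholz}), so there is no internal proof to compare against, and your decision to sketch the cited argument rather than re-derive it matches the paper's stance. Your outline is a faithful summary of the standard route: reduce to even exponents $p=2k$, expand $\EE\,\mathrm{tr}\bigl((\sum_i \eps_i M_i)^{2k}\bigr)$, discard non-pair-partition terms by independence, dominate the surviving trace monomials by $\mathrm{tr}\bigl((\sum_i M_i^2)^{k}\bigr)$ via the Buchholz/Haagerup-type trace rearrangement for crossing pairings, and track the Gaussian-moment-type constant to get $B_p=O(\sqrt p)$ rather than $O(p)$.

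One caution: the parenthetical claim that \emph{monotonicity} of Schatten norms reduces general $p\geq 2$ to even integers does not work, since for $p\leq 2k$ the inequalities $\|X\|_{\mathrm{C}_{2k}^n}\leq\|X\|_{\mathrm{C}_p^n}$ point the wrong way on both sides of \eqref{ineq:non_com:Khintchine}; the correct reduction is the complex interpolation you mention afterwards (or, for the purposes of this paper, simply noting that the application only uses $p=\Theta(\log t)$, which may be taken to be an even integer). With that parenthetical dropped, your sketch is an accurate account of the argument the paper delegates to the cited references.
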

Now we are ready to prove Lemma~\ref{lem:E_p_vs_sum_of_squares}.
\begin{proof}(of Lemma~\ref{lem:E_p_vs_sum_of_squares})
The proof is inspired from~\cite[Theorem~3.1]{lowrank:rankone:VR}. Let $p\geq 2$.
First, apply a standard symmetrization argument (see~\cite{book:LedouxTalagrand}), which gives that 
\begin{equation*}
 \left(\EE_{M_{[t]}}{\left\|\frac1{t}\sum_{i=1}^{t}{M_i} -
\EE{M}\right\|^p_2}\right)^{\frac1{p}}\\
 \leq 2 \left(\EE_{M_{[t]}}\EE_{\eps_{[t]}}{{\norm{\frac1{t}\sum_{i=1}^{t}{\eps_i M_i}}^p}}\right)^{\frac{1}{p}}.
\end{equation*}
Indeed, let $\eps_1,\eps_2,\dots, \eps_t$ denote independent Bernoulli
variables. Let $M_1,\dots ,M_t,\widetilde{M}_1,\dots, \widetilde{M}_t$ be independent
copies of $M$. We essential estimate the $p$-th root of $E_p$,
\begin{equation}\label{eq:expect_moments}
 E_p^{1/p}\  =\ \left(  \EE_{M_{[t]}}{\norm{ \frac1{t} \sum_{i=1}^{t} M_i
-\EE M }^p} \right)^{1/p}
\end{equation}
Notice that $\EE{ \widetilde{M}}=\EE_{\widetilde{M_{[t]}}}{\left(\frac1{t}\sum_{i=1}^{t}
\widetilde{M_i}\right)}$. We plug this into \eqref{eq:expect_moments} and apply
Jensen's inequality,
\begin{eqnarray*}
 E_p^{1/p} & = & \left(  \EE_{M_{[t]}}{ \norm{\EE_{\widetilde{M}_{[t]}}{ \frac1{t}\sum_{i=1}^{t} M_i -\frac1{t}\sum_{i=1}^{t} \widetilde{M_i}} }^p} \right)^{1/p}\\
           & \leq & \left(  \EE_{M_{[t]} }{\EE_{\widetilde{M}_{[t]}}{ \norm{ \frac1{t}\sum_{i=1}^{t} M_i -\frac1{t}\sum_{i=1}^{t} \widetilde{M_i} }^p}} \right)^{1/p}.
\end{eqnarray*}
Now, notice that $M_i - \widetilde{M}_i$ is a symmetric matrix-valued random
variable for every $i\in{[t]}$, i.e., it is distributed identically with
$\eps_i(M_i -\widetilde{M}_i)$. Thus
\begin{equation*}
 E_p^{1/p} \leq \left( 
\EE_{M_{[t]}}{\EE_{\widetilde{M}_{[t]}}{\EE_{\eps_{[t]}}{\norm{\frac1{t}
\sum_{i=1}^{t} \eps_i(M_i - \widetilde{M_i})}^p}}} \right)^{1/p}.
\end{equation*}
Denote $Y=\frac1{t}\sum_{i=1}^{t} \eps_iM_i$ and
$\widetilde{Y}=\frac1{t}\sum_{i=1}^{t} \eps_i\widetilde{M_i}$.
Then $\|Y-\widetilde{Y}\|^p \leq (\|Y\| + \|\widetilde{Y}\|)^p \leq 2^p (\|Y\|^p
+\|\widetilde{Y}\|^p)$, and $\EE{\|Y\|^p}=\EE{\|\widetilde{Y}\|^p} $. Thus, we obtain
that
\begin{equation}\label{ineq:symmetrization}
 E_p^{1/p} \leq 2\left(\EE_{M_{[t]}}{\EE_{\eps_{[t]}}{ \left\|
\frac1{t}\sum_{i=1}^{t}\eps_i M_i\right\|^p_2}} \right)^{1/p}.
\end{equation}
Now by the Khintchine's inequality the following holds for any \emph{fixed}
symmetric matrices $M_1,M_2, \dots , M_t$.
\begin{eqnarray}
 	\left(\EE_{\eps_{[t]}} \norm{\frac1{t}\sum_{j=1}^{t}{\eps_j M_j} }^p \right)^{\frac1{p}}  
		      & \leq &  \frac1{t}\left(\EE_{\eps_{[t]}} \left\|\sum_{j=1}^{t}{\eps_j M_j} \right\|^p_{C_p} \right)^{\frac1{p}} \nonumber \\
		      & \leq &  \frac1{t} B_p \left\|\left(\sum_{j=1}^{t}{M_j^2}\right)^{1/2}\right\|_{C_p}\nonumber\\
		      & \leq &  \frac{(rt)^{1/p} B_p}{t} \norm{\left(\sum_{j=1}^{t}{M_j^2}\right)^{\frac1{2}}}\nonumber\\
		      &   =  & \frac{(rt)^{1/p} B_p}{t} \norm{\sum_{j=1}^{t}{M_j^2}}^{\frac1{2}} \label{ineq:final},
\end{eqnarray}
taking $1/t$ outside the expectation and using the left part of
Ineq.~\eqref{ineq:Schatter:rank}, Ineq.~\eqref{ineq:non_com:Khintchine}, the
right part of Ineq.~\eqref{ineq:Schatter:rank} and the fact that the matrix
$\left(\sum_{j=1}^{t}{M_j^2}\right)^{1/2}$ has rank at most $rt$.

Now raising Ineq.~\eqref{ineq:final} to the $p$-th power on both sides and
then take expectation with respect to $M_1,\dots ,M_t$, it follows from
Ineq.~\eqref{ineq:symmetrization} that 
\begin{equation*}
 		E_p\ \leq\ 2^p  \cdot \frac{rt}{t^p}B_p^p
\EE_{M_{[t]}}{\norm{\sum_{j=1}^{t}{M_j^2}}^{p/2}}.
\end{equation*}
This concludes the proof of Lemma~\ref{lem:E_p_vs_sum_of_squares}.
\end{proof}
Now we are ready to prove Theorem~\ref{thm:chernoff:matrix_valued:low_rank}. First we can assume without loss of generality that $M\succeq 0$ almost surely losing only a constant factor in our bounds. Indeed, by the spectral decomposition theorem any symmetric matrix can be written as $M=\sum_{j} \lambda_j u_j u_j^\top $. Set $M_{+} = \sum_{\lambda_j \geq 0} \lambda_j u_j u_j^\top$ and $M_{-}=M-M_{+}$. It is clear that $\norm{M_{+}},\norm{M_{-}} \leq \norm{M}$, $\frobnorm{M_{+}}, \frobnorm{M_{-}} \leq \frobnorm{M}$ and $\rank{M_{+}}, \rank{M_{-}} \leq \rank{M}$. Triangle inequality tells us that
\begin{eqnarray*}
\norm{\frac1{t} \sum_{i=1}^{t} M_j - \EE{M}} \ \leq\ \norm{\frac1{t} \sum_{i=1}^{t} (M_j)_{+} - \EE{M_{+}}}\\
+\ \norm{\frac1{t} \sum_{i=1}^{t} (M_j)_{-} - \EE{M_{-}}}  
\end{eqnarray*}
and one can bound each term of the right hand side separately. Hence, from now on we assume that $M \succeq 0$ a.s.. Now use the fact that for every $j\in{[t]}$, $M_j^2 \preceq \gamma \cdot
M_j$ since $M_j$'s are positive semi-definite and $\norm{M}\leq \gamma$
almost surely. Summing up all the inequalities we get that
\begin{equation}\label{ineq:psd:sum_of_squares}
\norm{\sum_{j=1}^{t}{M_j^2}} \leq \gamma \norm{\sum_{j=1}^{t}{M_j}}.
\end{equation}
It follows that
\begin{eqnarray*}
 	E_p	& \leq & r t^{1-p} (2B_p)^p	\EE_{M_{[t]}}{\norm{\sum_{j=1}^{t}{M_j^2}}^{p/2}}\\
		& \leq & r t^{1-p} (2B_p)^p\gamma^{p/2} \EE_{M_{[t]}}{\norm{\sum_{j=1}^{t}{M_j}}^{p/2}} \\
		&   =  &  \frac{rt(2B_p\sqrt{\gamma})^p}{t^{p/2}} \EE_{M_{[t]}}{\norm{\frac1{t}\sum_{j=1}^{t}{M_j}}^{p/2}} \\
		&   =  & \frac{rt(2B_p\sqrt{\gamma})^p}{t^{p/2}} \EE_{M_{[t]}}{\norm{\frac1{t}\sum_{j=1}^{t}{M_j} -\EE M + \EE M}^{p/2}} \\
		& \leq & \frac{rt(2B_p\sqrt{\gamma})^p}{t^{p/2}} \left(\left(\EE{\norm{\frac1{t}\sum_{j=1}^{t}{M_j} - \EE M}^{\frac{p}{2}}}\right)^{\frac{2}{p}} + 1 \right)^{\frac{p}{2}}\\
		& \leq & \frac{rt(2B_p\sqrt{\gamma})^p}{t^{p/2}} \left(\left(\EE{\norm{\frac1{t}\sum_{j=1}^{t}{M_j} -\EE M}^{p}}\right)^{\frac1{p}} + 1 \right)^{\frac{p}{2}} \\
		&   =  & \frac{rt(2B_p\sqrt{\gamma})^p}{t^{p/2}} \left(E_p^{1/p} + 1 \right)^{p/2},
\end{eqnarray*}
using Lemma~\ref{lem:E_p_vs_sum_of_squares},
Ineq.~\eqref{ineq:psd:sum_of_squares}, Minkowski's inequality, Jensen's
inequality, definition of $E_p$ and the assumption $\norm{\EE{M}} \leq 1$. This
implies the following inequality
\begin{equation}
 E_p^{1/p} \ \leq \    \frac{2B_p\sqrt{\gamma} (rt)^{1/p}}{\sqrt{t}} (E_p^{1/p}
+ 1),
\end{equation}
using that $\sqrt{1+x}\leq 1+ x$, $x\geq 0$. Let $a_p=\frac{4B_p\sqrt{\gamma}
(rt)^{1/p}}{\sqrt{t}}$. Then it follows from the above inequality that $
E_p^{1/p} \leq \frac{a_p}{2} (E_p^{1/p} +1)$. It follows that\footnote{Indeed,
if $E_p^{1/p}<1$, then $E_p^{1/p} < a_p $. Otherwise $1\leq a_p$.} $\min
\{E_p^{1/p}, 1\} \leq a_p $. Also notice that 
\begin{equation}\label{ineq:exp_min}
\left(\EE \min \{Z, 1 \}^p\right)^{1/p} \leq \min ( E_p^{1/p}, 1).
\end{equation}
Now for any $0<\eps <1$,
\begin{equation*}
 \Prob{ Z >\eps }\ =\ \Prob{\min\{Z,1\} > \eps }.
\end{equation*}
By the moment method we have that 
\begin{eqnarray*}
 \Prob{\min\{Z,1\} > \eps } &=& \Prob{\min\{Z,1\}^p > \eps^p } \\
			    & \leq & \inf_{p\geq 2}\left( \frac{ \EE{\min\{Z,1\}^p} }{\eps^{p}}  \right)  \\
			    & \leq & \inf_{p\geq 2}\left( \frac{\min\{E_p^{1/p},1\}^p}{\eps^{p}} \right) \quad \eqref{ineq:exp_min}\\
			    & \leq & \inf_{p\geq 2} \left(\frac{a_p}{\eps}\right)^p \\
			    &   =  & \inf_{p\geq 2} \left(\frac{4B_p\sqrt{\gamma} (rt)^{1/p}}{\eps\sqrt{t}}\right)^p  \\
	    		    &   =  & \inf_{p\geq 2} \left(C_2\frac{\sqrt{p\gamma} (rt)^{1/p}}{\eps\sqrt{t}}\right)^p,
\end{eqnarray*}
where $C_2>0$ is an absolute constant.

Now assume that $r\leq t$ and then set $p=c_2\log t$, where
$c_2>0$ is a sufficient large constant, at the infimum expression in the above inequality, 
it follows that
\begin{eqnarray*}
\Prob{ \norm{ \frac1{t}\sum_{i=1}^{t}{M_i} - \EE M } >\eps }
&\leq&  \left(C\frac{\sqrt{\gamma\log t } (rt)^{\frac1{\log t}}}{\eps\sqrt{t}}\right)^{c_2\log t}
\end{eqnarray*}
We want to make the base of the above exponent smaller than one. It is easy to see that this is possible if we set 
$t=C_0 \gamma/\eps^{2} \log (C_0\gamma/\eps^{2})$ where $C_0$ is sufficiently large
absolute constant. Hence it implies that the above probability is at most $1/\poly{t}$. This concludes the proof.

\ignore{
\clearpage
\section{The stable rank subspace JL Lemma}
In this section we will present a stronger version of Lemma~\ref{lem:jl_subspace}.
\begin{lemma}\label{lem:Rudelson}
Let $A$ be an $n\times m$ real matrix, and let $R$ be a $t\times n$ random sign matrix. If $t\geq \sr{A}$, then
\begin{equation}
 \Prob{ \norm{\frac1{\sqrt{t}} RA} \geq 4 \norm{A} }\ \leq\ 2e^{-t/2}.
\end{equation}
\end{lemma}
\begin{proof}
Without loss of generality assume that $\norm{A} = 1$. Then $\frobnorm{A} = \sqrt{\sr{A}}$. Let $G$ be a $t\times n$ Gaussian matrix. Then by the Gordon-Chevet inequality\footnote{For example, set $S=I_t, T=A$ in \cite[Proposition~$10.1$,p.~$54$]{matrix:survey:HMT09}.}
\begin{eqnarray*}
 	\EE{\norm{GA} }	\ \leq \ \norm{I_t}\frobnorm{A} + \frobnorm{I_t}\norm{A} \ =\ \frobnorm{A} + \sqrt{t}\ \leq\ 2\sqrt{t}.
\end{eqnarray*}
The Gaussian distribution is symmetric, so $G_{ij}$ and $R_{ij}\cdot |g|$, where $g$ is a Gaussian random variable have the same distribution. By Jensen's inequality and the fact that $\EE{|g|}=\sqrt{2/\pi}$, we get that $\sqrt{2/\pi} \EE{\norm{RA}} \leq \EE{\norm{GA}}$.
Define the function $f:\RR^{t\times n} \to \RR$ by $f(R) = \norm{\frac1{\sqrt{t} } RA}$. The calculation above shows that $\text{median}(f)\leq \sqrt{2\pi }$. Since $f$ is convex and $(1/\sqrt{t})$-Lipschitz as a function of the entries of $R$, Talagrand's measure concenctration inequality for convex functions yields
\begin{equation*}
 \Prob{ \norm{\frac1{\sqrt{t}} RA} \geq \text{median}(f) +\delta} \leq 2 \exp (-\delta^2 t/2).
\end{equation*}
Setting $\delta =1 $ in the above inequality implies the lemma.
\end{proof}
Now using the above Lemma together with Theorem~\ref{thm:matrixmult} (\textit{i.a}) and a truncation argument we can prove that even projecting to dimensions proportional to the stable rank suffices.
\begin{theorem}
Let $A$ be an $n\times m$, and $B$ be an $n\times p$ matrix. Let $\eps > 0$ and let $t\in \NN$ be such that
\begin{equation*}
 t \geq C \dfrac{ \max\{ \sr{A} , \sr{B} \} }{\eps^4}.
\end{equation*}
 Let $R$ be a $t \times n$ random sign matrix. Set $\widetilde{A} = \frac1{\sqrt{t}}RA$ and $\widetilde{B}=\frac1{\sqrt{t}}RB$. Then
\begin{equation}
 \Prob{ \norm{ \widetilde{A}^\top \widetilde{B} - A^\top B} < \eps \norm{A}\norm{B} }\ \geq\ 1 - \exp (-c\eps^2 t).
\end{equation}
Here $C$ and $c$ are absolute constants.
\end{theorem}
\begin{proof}
 Without loss of generality assume that $\norm{A}=\norm{B}=1$. Set $r =\lfloor  \frac{1600 \max\{ \sr{A}, \sr{B} \}}{\eps^2}\rfloor$. Set $\widehat{A} = A - A_r$, $\widehat{B} = B- B_r$. Since $\frobnorm{A}^2 = \sum_{j=1}^{\rank{A}} \sigma_j(A)^2$,
\begin{eqnarray*}
 	\norm{\widehat{A}}	& \leq & \dfrac{\frobnorm{A} }{\sqrt{r}} \leq \dfrac{\eps}{40}, \\
	\norm{\widehat{B}}	& \leq & \dfrac{\frobnorm{B} }{\sqrt{r}} \leq \dfrac{\eps}{40}.
\end{eqnarray*}
By triangle inequality, it follows that
\begin{eqnarray}
 	\norm{ \widetilde{A}^\top \widetilde{B} - A^\top B} & \leq & \norm{ \frac1{t}A_r^\top R^\top  R B_r - A_r^\top B_r} \label{ineq:rud1}\\
						    &   +  & \norm{ \frac1{t}\widehat{A}^\top R^\top R B_r} + \norm{ \frac1{t}A_r^\top R^\top  R \widehat{B} } + \norm{ \frac1{t}\widehat{A}^\top R^\top R \widehat{B}} \label{ineq:rud2}\\
						    &   +  & \norm{ \widehat{A}^\top B_r} + \norm{ A_r^\top \widehat{B} } + \norm{ \widehat{A}^\top \widehat{B} }\label{ineq:rud3}.
\end{eqnarray}
Choose a constant $C$ in Theorem~\ref{thm:matrixmult} (\textit{i.a}) so that the right hand side of the last expression does not exceed $\exp (-c\eps^2 t)$, where $c=c_1/32$. The same argument shows that $\Prob{ \norm{\frac1{\sqrt{t}} R A_r} \geq 1 + \eps } \leq \exp( -c\eps^2 t)$ and $\Prob{ \norm{\frac1{\sqrt{t}} R B_r} \geq 1 + \eps } \leq \exp( -c\eps^2 t)$. This combined with Lemma~\ref{lem:Rudelson} applied on $\widehat{A}$ and $\widehat{B}$ yields that the sum in~\eqref{ineq:rud2} is less than $2(1+\eps) \eps / 10 +\eps^2 /100$. Also, since $\norm{A_r},\norm{B_r} \leq 1$, the sum in~\eqref{ineq:rud3} is less that $2\eps/10 + \eps^2 /100$. Combining the bounds for~\eqref{ineq:rud1},~\eqref{ineq:rud2} and \eqref{ineq:rud3} concludes the claim.
\end{proof}

}

\end{document}